\title{Constant-Depth Sorting Networks} %TODO Please add
\author{Natalia Dobrokhotova-Maikova}{Yandex, Moscow, Russia}{dobromayk@yandex.ru
}{}{}%TODO mandatory, please use full name; only 1 author per \author macro; first two parameters are mandatory, other parameters can be empty. Please provide at least the name of the affiliation and the country. The full address is optional. Use additional curly braces to indicate the correct name splitting when the last name consists of multiple name parts.
\author{Alexander Kozachinskiy}{Steklov Mathematical Institute of Russian Academy of Sciences, Moscow, Russia}{kozlach@mail.ru}{https://orcid.org/0000-0002-9956-9023}{The work of A.~Kozachinskiy was performed at the Steklov International Mathematical Center and supported by the Ministry of Science and Higher Education of the Russian Federation (agreement no. 075-15-2022-265).
}
\author{Vladimir Podolskii}{Steklov Mathematical Institute of Russian Academy of Sciences, Moscow, Russia \and HSE University, Moscow, Russia}{podolskii.vv@gmail.com}{https://orcid.org/0000-0001-7154-138X}{}
\authorrunning{N. Dobrokhotova-Maikova, A. Kozachinskiy and V. Podolskii} %TODO mandatory. First: Use abbreviated first/middle names. Second (only in severe cases): Use first author plus 'et al.'
\keywords{Sorting networks, constant depth, lower bounds, threshold circuits} %TODO mandatory; please add comma-separated list of keywords
\newcommand{\cube}[3][fill=white]{
    \draw[black, #1] (#2) -- ++(-#3,0,0) -- ++(0,#3,0) -- ++(0,0,-#3) -- ++(#3,0,0) -- ++(0,-#3,0)-- cycle;
    \draw[black,line join=bevel, #1] (#2) -- ++(0,#3,0) -- ++(0,0,-#3) (#2)++(0,#3,0)--++(-#3,0,0);
}
\colorlet{one}{black!60!white}
\colorlet{two}{black!20!white}
\begin{document}

\maketitle

%TODO mandatory: add short abstract of the document
\begin{abstract}
In this paper, we address sorting networks that are constructed from comparators of arity $k > 2$. That is, in our setting the arity of the comparators --- or, in other words, the number of inputs that can be sorted at the unit cost ---  is a parameter. We study its relationship with two other parameters --- $n$, the number of inputs, and $d$, the depth.

This model received considerable attention. Partly, its motivation is to better understand the structure of sorting networks. In particular, sorting networks with large arity are related to recursive constructions of ordinary sorting networks.
Additionally, studies of this model have natural correspondence with  a recent line of work on constructing circuits for majority functions from majority gates of lower fan-in.

Motivated by these questions, we obtain the first lower bounds on the arity of constant-depth sorting networks.
More precisely, we consider sorting networks of depth $d$ up to 4, and determine the minimal $k$ for which there is such a network with comparators of arity $k$. For depths $d=1,2$ we observe that $k=n$. For $d=3$ we show that $k = \lceil \frac n2 \rceil$. For $d=4$ the minimal arity becomes sublinear: $k = \Theta(n^{2/3})$. This contrasts with the case of majority circuits, in which $k = O(n^{2/3})$ is achievable already for depth $d=3$. 
\end{abstract}

\section{Introduction}

A sorting network receives an array of numbers and outputs the same numbers in the non-decreasing  order.
It consists of \emph{comparators} that can swap any two numbers from the array if they are not in the non-decreasing order. The main parameters of a sorting network are the size, that is, the number of comparators, and the depth, that is, the number of layers in the network, where each layer consists of several comparators applied to disjoint pairs of variables. Sorting networks are a classical model in theoretical computer science with vast literature devoted to them, see, for example~\cite{Batcher68,AjtaiKS83,Leighton85,
paterson1990improved,Parberry92,KahaleLMPSS95,seiferas2009sorting,BundalaZ14}, see also the Knuth's book~\cite{Knuth98} and the Baddar's and Batcher's book~\cite{BaddarB12}. Despite considerable efforts, still there are many open problems related to sorting networks.

In this paper, our main interest is the depth of  sorting networks. 
There is a folklore
$(2 - O(1))\log_2 n$  depth lower bound for networks sorting $n$ numbers. 
It was improved by Yao~\cite{Yao80} and later by Kahale et al.~\cite{KahaleLMPSS95} with the current record about $3.27 \log_2 n$. As for  upper bounds, a construction with $O(\log n)$ depth was given in~\cite{AjtaiKS83} and is usually referred to as the AKS sorting network. However, this construction is very complicated and impractical due to a large constant hidden in the O-notation.
 There are some simplifications and improvements of this construction~\cite{paterson1990improved,seiferas2009sorting}, but they did not make it practical.
On the other hand, there are several simple and practical constructions
of depth $\Theta(\log^2 n)$~\cite{Knuth73,Batcher68,Parberry92}.

One could also consider sorting networks with comparators that have $k > 2$ inputs. We will call them $k$-sorting networks.
They appear in the literature 
since 70s, the setting is mentioned already in the Knuth's book~\cite[Problem 5.3.4.54]{Knuth98}, followed by numerous works~\cite{TsengL85,ParkerP89,BeigelG90,NakataniHAT89,CypherS92,LeeB95,ShiYW14,gao1997sloping,zhao1998efficient}.
They are usually studied to better understand the structure of ordinary sorting networks. More specifically, they are closely related to recursive constructions of sorting networks. Having a good construction of a $k$-sorting network, one can apply it to its own comparators, getting a construction with smaller $k$, until eventually $k$ becomes 2, and we get an ordinary sorting network.

Any $k$-sorting network with $n$ inputs must have depth at least $\log_k n$ because otherwise outputs cannot be connected to all $n$ inputs.
Parker and Parbery~\cite{ParkerP89} constructed a simple and potentially practical $k$-sorting network of depth $\leq 4 \log^2_k n$ (in case when $n$ is an integral power of $k$).
 At the same time, as  Chv{\'a}tal shows in his lecture notes~\cite{Chvatal92}, 
the AKS sorting network also generalizes to this setting, giving a construction of depth $O(\log_k n)$. However, as with the AKS sorting network itself, this construction is complicated and impractical. 
So the search for simple constructions continues.

There is a related setting of computing 
majority function by monotone Boolean circuits.  Majority function receives as input a sequence of $n$ bits and outputs 1 if and only if more than a half of the inputs are 1's. Monotone Boolean circuits consist of AND and OR gates of fan-in 2. There is a quest of constructing a monotone Boolean circuit for majority which is simple and practical and  has depth $O(\log n)$. Note that it  can only be easier than an analogous quest for sorting networks. This is because  a sorting network can be transformed into a monotone Boolean circuit which  computes majority and has the same depth.
Indeed, if we restrict inputs to $\{0,1\}^n$, then each comparator can be simulated by a pair of AND and OR gates (AND computes the minimum of two Boolean inputs and OR computes the maximum). And the majority is just the median bit of the sorted array. In particular, we get an  $O(\log n)$-depth monotone circuit for majority from the AKS sorting network. Yet again, the resulting circuit has the same disadvantages as the AKS construction. But in contrast to sorting networks, there is an alternative construction of a monotone depth-$O(\log n)$ Boolean circuit for majority due to Valiant~\cite{Valiant84}. His construction is simple and has
 a reasonable constant hidden in the O-notation, but it is randomized. It was partially derandomized and made closer to practice by Hoory, Magen and Pitassi~\cite{hoory2006monotone}. But still all known fully deterministic constructions that are simple and practical  are of depth $\Theta(\log^2 n)$.

Just as with sorting networks, we can consider monotone circuits with 
majority gates 
of fan-in at most $k$. Similarly to the case $k=2$, one can use $k$-sorting networks for constructing such circuits.
Recently, considerable attention was given to the following question: given $d$ and $n$, what is the minimal $k$ for which there is a monotone circuit with majority gates of fan-in $k$,
 computing majority on $n$ bits?
This was first studied by Kulikov and Podolskii~\cite{KulikovP19}. 
There are two natural models that can be considered here. In the first one,
 we allow to use only majority functions as gates of the circuit. Analogously to fan-in 2 model, it is not harder to construct such a circuit than a sorting network. In the second one, 
 we allow to use arbitrary threshold functions as gates. A threshold function with threshold $t$ outputs 1 iff at least $t$ of its inputs are equal to 1. Clearly, the first model is more restrictive. However, the power of the models is the same up to a constant factor: one can just simulate a
 threshold gate as a majority gate adding constant inputs.

For depth-2 circuits, it was shown in~\cite{KulikovP19} that $k = \Omega\left(n^{13/19}\cdot (\log n)^{-2/19}\right)$. For the case when the weights of inputs on the first layer of the circuit are equal to one, this lower bound was improved in the series of works~\cite{EngelsGMR20,HrubesRRY19} to $k \geq n/2 -o(n)$. For the general $d$ it was shown in~\cite{KulikovP19} that
\[
k = \Omega\left(\frac{n^{14/(7d+6)}}{(\log n)^{4/(7d+6)} }\right).
\]

As for the upper bounds, for depth $d=2$ it was shown that $k \leq n-2$ for the model with standard majority circuits~\cite{kombarov2017,amano2017} and $k \leq \frac{2n}{3}+O(1)$ for the model with gates having arbitrary thresholds~\cite{bruno,Posobin17}.
For depth $d=3$,
 there is a circuit with $k = O (n^{2/3})$~\cite{KulikovP19}.

 Finally, Hrubes and Rao~\cite{HrubesR15} studied a more general problem of computing Boolean functions by depth-2 circuits, consisting of arbitrary functions of small fan-in as gates. Recently, Lecomte, Ramakrishnan and Tan~\cite{LecomteRT22} obtained tight lower bounds for majority function in this model.

\paragraph*{Our results}
We initiate the studies of constant-depth $k$-sorting networks. As with the case of majority circuits, we fix  $d$ and try to find the minimal $k$ for which a  $k$-sorting network of depth $d$ exists. Namely, we determine the minimal possible value of $k$ for sorting networks of depth up to 4. For depth $d=1,2$ we show that $k$ must be equal to $n$ (the number of inputs), so that no non-trivial network exists in this case. For depth $d=3$ we show that the optimal $k$ is equal to $\lceil n/2 \rceil$. For depth $d=4$ we prove a lower bound $k = \Omega(n^{2/3})$. A matching upper bound is implicit in~\cite[Section 2]{Leighton85}. For arbitrary depth, we show that $k = \Omega\left(n^{1/\lceil \frac d2 \rceil}\right)$. When written as a lower bound on $d$, the last bound gives us $d\ge 2\log_k n - O(1)$.  It is twice as strong as a trivial lower bound $d\ge \log_k n$, mentioned previously.

We note that we obtain larger values of $k$ for sorting networks than for majority circuits of the same depth.
 For example, $k$ becomes sublinear only for depth 4 in the sorting network setting, whereas for majority circuits we get the same upper bound $k=O(n^{2/3})$ for $d=3$. 
This can be viewed as another indication that 
constructing sorting networks is strictly harder in general than constructing circuits for majority. On one hand, the fan-in becomes sublinear one step earlier for majority circuits than for sorting networks.
On the other hand, we have methods that prove matching lower bounds for sorting networks of depth up to 4, whereas for majority circuits 
there are no
strong lower bounds even for depth $3$.

\medskip 

The rest of the paper is organized as follows. In Section~\ref{sec:techniques} we provide the necessary preliminary information, introduce the main techniques and use it to obtain the first results -- a bound for arbitrary depth and bounds for depth 1 and 2. In Section~\ref{sec:3} we prove the results on sorting networks of depth 3. In Section~\ref{sec:4} we prove the result on sorting networks of depth~4.

\section{The Techniques and Initial Results}
\label{sec:techniques}

We start with some terminology related to sorting networks.
A depth-$d$ network with $n$ inputs consists of $d + 1$ arrays enumerated from $0$ to $d$, each having $n$ \emph{cells}. We imagine these arrays arranged from left to right, see Figure~\ref{example} for an example.  We denote cells in the arrays by $c_{a,b}$ for $a=0,\ldots, d$ and $b=1,\ldots, n$. Cells $c_{0,b}$ are input cells, each of these cells can contain an arbitrary integral number. We will refer to cell $c_{d, b}$'s as output cells of the network.

Next, between any two consecutive arrays of cells there is a \emph{layer of comparators}. They are enumerated from 1 to $d$, so the $a$th layer is between the $(a-1)$st array and  the $a$th array. Formally, a layer of comparators is a partition of the set $\{1, 2, \ldots, n\}$ into subsets called \emph{comparators}. The maximal size of a comparator over all layers is called the \emph{arity} of the network (we usually denote the arity by $k$).

If $I\subseteq\{1, 2, \ldots, n\}$ is a comparator from the $a$th layer, then it  is applied to the cells $c_{a-1, i}, i\in I$ (we will refer to these cells as input cells of $I$). It sorts them in the non-decreasing order and writes the results into the cells $c_{a, i}, i\in I$ (we will refer to them as output cells of $I$). In this way, given an assignment of the input cells of the network, one inductively (from left to right) defines the values of the remaining cells of the network. We say that a network is \emph{sorting} if for any input the array with index $d$ (one with the output cells) is sorted.

\begin{figure}[h!]
\centering
\begin{tikzpicture}[square/.style={regular polygon,regular polygon sides=4},scale=0.8,every node/.style={scale=0.8}]

\draw[->, line width=1mm] (-1,0) -- (-1,7);

	\node at (-2,0) [draw=none] (0) {};
\node at (1.5,7.5) [draw=none] (a) {1st layer};
\node at (4.5,7.5) [draw=none] (b) {2nd layer};
\node at (7.5,7.5) [draw=none] (c) {3rd layer};

        \node at (0,0) [square,draw,thick,minimum width=1.3cm] (l11) {$1$};
        \node at (0,1) [square,draw, thick, minimum width=1.3cm] (l12) {$5$};
        \node at (0,2) [square,draw,thick, minimum width=1.3cm] (l13) {$8$};
        \node at (0,3) [square,draw,thick,minimum width=1.3cm] (l14) {$2$};
        \node at (0,4) [square,draw,thick,minimum width=1.3cm] (l15) {$42$};
        \node at (0,5) [square,draw,thick,minimum width=1.3cm] (l16) {$27$};
        \node at (0,6) [square,draw,thick,minimum width=1.3cm] (l17) {$7$};
        \node at (0,7) [square,draw,thick,minimum width=1.3cm] (l18) {$4$};

		 \node at (3,0) [square,draw,thick,minimum width=1.3cm] (l21) {$1$};
        \node at (3,1) [square,draw, thick, minimum width=1.3cm] (l22) {$2$};
        \node at (3,2) [square,draw,thick, minimum width=1.3cm] (l23) {$5$};
        \node at (3,3) [square,draw,thick,minimum width=1.3cm] (l24) {$8$};
        \node at (3,4) [square,draw,thick,minimum width=1.3cm] (l25) {$4$};
        \node at (3,5) [square,draw,thick,minimum width=1.3cm] (l26) {$7$};
        \node at (3,6) [square,draw,thick,minimum width=1.3cm] (l27) {$27$};
        \node at (3,7) [square,draw,thick,minimum width=1.3cm] (l28) {$42$};

\draw [-] (l11) -- (l21) node [midway, draw,circle,fill=one] {};
 \draw [-] (l12) -- (l22) node [midway, draw,circle,fill=one] {};
 \draw [-] (l13) -- (l23) node [midway, draw,circle,fill=one] {};
 \draw [-] (l14) -- (l24) node [midway, draw,circle,fill=one] {};
 \draw [-] (l15) -- (l25) node [midway, draw,circle,fill=two] {};
 \draw [-] (l16) -- (l26) node [midway, draw,circle,fill=two] {};
 \draw [-] (l17) -- (l27) node [midway, draw,circle,fill=two] {};
 \draw [-] (l18) -- (l28) node [midway, draw,circle,fill=two] {};
 
		 \node at (6,0) [square,draw,thick,minimum width=1.3cm] (l31) {$1$};
        \node at (6,1) [square,draw, thick, minimum width=1.3cm] (l32) {$2$};
        \node at (6,2) [square,draw,thick, minimum width=1.3cm] (l33) {$5$};
        \node at (6,3) [square,draw,thick,minimum width=1.3cm] (l34) {$8$};
        \node at (6,4) [square,draw,thick,minimum width=1.3cm] (l35) {$4$};
        \node at (6,5) [square,draw,thick,minimum width=1.3cm] (l36) {$7$};
        \node at (6,6) [square,draw,thick,minimum width=1.3cm] (l37) {$27$};
        \node at (6,7) [square,draw,thick,minimum width=1.3cm] (l38) {$42$};

\draw [-] (l21) -- (l31) node [midway, draw,circle,fill=one] {};
 \draw [-] (l22) -- (l32) node [midway, draw,circle,fill=one] {};
 \draw [-] (l23) -- (l33) node [midway, draw,circle,fill=two] {};
 \draw [-] (l24) -- (l34) node [midway, draw,circle,fill=two] {};
 \draw [-] (l25) -- (l35) node [midway, draw,circle,fill=one] {};
 \draw [-] (l26) -- (l36) node [midway, draw,circle,fill=one] {};
 \draw [-] (l27) -- (l37) node [midway, draw,circle,fill=two] {};
 \draw [-] (l28) -- (l38) node [midway, draw,circle,fill=two] {};

\node at (9,0) [square,draw,thick,minimum width=1.3cm] (l41) {$1$};
        \node at (9,1) [square,draw, thick, minimum width=1.3cm] (l42) {$2$};
        \node at (9,2) [square,draw,thick, minimum width=1.3cm] (l43) {$4$};
        \node at (9,3) [square,draw,thick,minimum width=1.3cm] (l44) {$5$};
        \node at (9,4) [square,draw,thick,minimum width=1.3cm] (l45) {$7$};
        \node at (9,5) [square,draw,thick,minimum width=1.3cm] (l46) {$8$};
        \node at (9,6) [square,draw,thick,minimum width=1.3cm] (l47) {$27$};
        \node at (9,7) [square,draw,thick,minimum width=1.3cm] (l48) {$42$};

 \draw [-] (l31) -- (l41) node [midway, draw,circle,fill=two] {};
 \draw [-] (l32) -- (l42) node [midway, draw,circle,fill=two] {};
 \draw [-] (l33) -- (l43) node [midway, draw,circle,fill=one] {};
 \draw [-] (l34) -- (l44) node [midway, draw,circle,fill=one] {};
 \draw [-] (l35) -- (l45) node [midway, draw,circle,fill=one] {};
 \draw [-] (l36) -- (l46) node [midway, draw,circle,fill=one] {};
 \draw [-] (l37) -- (l47) node [midway, draw,circle,fill=two] {};
 \draw [-] (l38) -- (l48) node [midway, draw,circle,fill=two] {};

    \end{tikzpicture}

  \caption{A sorting network with $n = 8, d = 3, k = 4$. There are 4 arrays of cells going from left to right. The arrow shows the ordering of the arrays (it goes from small to large indices).  The comparators of the network are given by the colors of circles over the lines. For example, the first layer has 2 comparators -- the dark gray one sorts the first 4 input cells, and the light gray one sorts the last 4 input cells.}
\label{example}
\end{figure}
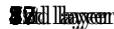

Sometimes, to avoid overusing indices of the arrays, we will refer to cells of the $a$th array as cells \emph{after the $a$th layer} (of comparators) or \emph{before the $(a+1)$st layer}. For example, input cells are cells before the first layer and cells of the $d$th array are cells after the last layer.

We will also say that some cells are of the network are \emph{connected} to each other. To define this, for each comparator we draw a directed edge from every input cell of this comparator to every output cell of this comparator. Then we say that two cells are connected if there is a directed path between them in the resulting graph.

In our upper bound for $d = 3$ we will employ the following classical principle.
\begin{lemma}[Zero-one principle~\cite{Knuth98}]
A network with $n$ inputs sorts all integer sequences in the non-decreasing order if and only if it sorts all sequences from $\{0,1\}^n$ in the non-decreasing order.
\end{lemma}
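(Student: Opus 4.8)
The plan is to prove the nontrivial direction by contraposition, using the fact that a sorting network commutes with coordinatewise application of any monotone map. The forward implication --- sorting all integer sequences implies sorting all sequences in $\{0,1\}^n$ --- is immediate, since $\{0,1\}^n\subseteq\mathbb{Z}^n$.

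First I would isolate the key commutation lemma: for any nondecreasing $f\colon\mathbb{Z}\to\mathbb{Z}$ and any single comparator $I$ of arity $k$, if $(b_i)_{i\in I}$ is the sorted version of the tuple $(a_i)_{i\in I}$, then $(f(b_i))_{i\in I}$ is the sorted version of $(f(a_i))_{i\in I}$. This holds because, writing $a_{(1)}\le\cdots\le a_{(k)}$ for the input values in sorted order, monotonicity gives $f(a_{(1)})\le\cdots\le f(a_{(k)})$, while the multiset $\{f(a_i)\}_{i\in I}$ equals $\{f(a_{(j)})\}_{j}$; hence $(f(a_{(j)}))_j$ is exactly the sorted version of $(f(a_i))_i$. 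Cells not touched by a given layer trivially satisfy the same relation. Applying this layer by layer and inducting on the depth, I conclude: if the network on input $x=(x_1,\ldots,x_n)$ produces output $y=(y_1,\ldots,y_n)$, then on input $(f(x_1),\ldots,f(x_n))$ it produces $(f(y_1),\ldots,f(y_n))$.

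Next, suppose for contradiction that the network sorts every sequence in $\{0,1\}^n$ but fails to sort some $x\in\mathbb{Z}^n$; let $y$ be its output, so there are positions $i<j$ with $y_i>y_j$. Choose an integer threshold $t$ with $y_j<t\le y_i$ (e.g.\ $t=y_j+1$) and define the nondecreasing step function $f(z)=0$ for $z<t$ and $f(z)=1$ for $z\ge t$. Then $(f(x_1),\ldots,f(x_n))\in\{0,1\}^n$, and by the commutation lemma the network outputs $(f(y_1),\ldots,f(y_n))$ on it; but $f(y_i)=1>0=f(y_j)$ with $i<j$, so this $\{0,1\}$-output is not sorted --- contradicting the assumption and completing the proof.

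As for difficulty, there is essentially no obstacle here: the argument is short and entirely standard. The only point needing mild care is the commutation lemma for arity-$k$ comparators when $f$ is not injective, but, as noted, passing through sorted multisets handles ties cleanly; everything else is a routine contrapositive argument.
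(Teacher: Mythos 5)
Your proof is correct and is the standard argument for the zero-one principle (commuting the network with a monotone threshold map and arguing by contraposition), which is exactly the proof in the cited reference; the paper itself states the lemma with a citation to Knuth and gives no proof. The extension of the commutation lemma from binary comparators to arity-$k$ comparators via sorted multisets is handled correctly, so nothing is missing.
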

By this principle, when constructing sorting networks, we can assume that each input cell receives either $0$ or $1$. In fact, we consider only binary inputs in our lower bounds as well.

In the rest of this section, we discuss our technique. It relies on the notion of \emph{access}. We say that we have access to a cell $c_{a,b}$ on input $x = (x_1, \dots, x_n) \in \{0,1\}^n$ if its value on $x$ is $0$, but it is possible to change one of the inputs $x_i$ from 0 to 1 in such a way that $c_{a,b}$ will become 1.

For example, before the first layer we have access to all cells that contain 0. If the network sorts all inputs correctly, then after the last layer we have access to a single cell with $0$ -- one with the largest index.

The following lemma establishes \emph{access stability}.
\begin{lemma} \label{lem:access_stability}
Consider an arbitrary network (not necessarily one which sorts all inputs correctly) and an arbitrary cell $c$ in it. Assume that we have access to $c$ on some input $x\in\{0,1\}^n$. Next, assume that we change one input bit of $x$ from $0$ to $1$. Then we can lose the access to $c$ only if this cell gets value 1.
\end{lemma}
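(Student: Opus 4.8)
The plan is to reduce everything to the \textbf{monotonicity} of sorting networks on binary inputs. First I would record the following basic fact: restricted to inputs in $\{0,1\}^n$, every comparator acts by sorting a string of $0$'s and $1$'s into the form $0\cdots0\,1\cdots1$, and on each output coordinate this is a monotone operation of the comparator's inputs (raising an input from $0$ to $1$ can only increase the number of trailing $1$'s). Since the value of an arbitrary cell $c_{a,b}$ is obtained by composing such operations layer by layer, the map $x\mapsto(\text{value of }c_{a,b}\text{ on }x)$ is a monotone Boolean function of $x\in\{0,1\}^n$. This is the only structural property of the network we will use.

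Next I would set up the situation of the lemma. Let $c$ be the cell in question and $x$ the input on which we have access to $c$; by definition $c$ evaluates to $0$ on $x$, and there is a coordinate $i$ such that, writing $x^{(i)}$ for $x$ with its $i$th bit raised to $1$, the cell $c$ evaluates to $1$ on $x^{(i)}$ (in particular $x_i=0$). Let $j$ be the coordinate we raise from $0$ to $1$, and let $x'=x^{(j)}$ be the new input. By monotonicity $c$ evaluates to $0$ or $1$ on $x'$. If it evaluates to $1$, then the cell attains value $1$, which is precisely the exceptional case allowed by the lemma, so there is nothing to prove; from now on assume $c$ evaluates to $0$ on $x'$, and it remains to exhibit a coordinate that still witnesses access to $c$ on $x'$.

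Finally I would argue that the old witness $i$ still works. If $i=j$ then $x'=x^{(i)}$, on which $c$ evaluates to $1$, contradicting our assumption; hence $i\neq j$, and therefore $x'_i=x_i=0$, so raising the $i$th bit of $x'$ from $0$ to $1$ is a legal move. The resulting input is exactly $x$ with both the $i$th and $j$th bits raised to $1$, and this input dominates $x^{(i)}$ coordinatewise; by monotonicity the value of $c$ on it is at least its value on $x^{(i)}$, namely $1$. Thus raising the $i$th bit of $x'$ turns $c$ into $1$, i.e.\ we still have access to $c$ on $x'$, as required.

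There is essentially no hard step here; the only thing to be careful about is to \emph{not} go looking for a brand-new access witness after the flip, but instead to re-use the original coordinate $i$ together with monotonicity. The small case split on whether $i$ coincides with the flipped coordinate $j$ is what makes this argument clean.
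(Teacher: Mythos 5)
Your proposal is correct and follows essentially the same route as the paper: establish that each cell's value is a monotone Boolean function of the inputs, then re-use the original access witness, handling the case where the witness coincides with the flipped coordinate exactly as the paper does (the paper phrases it as "this could not be $x_i$ itself because otherwise $c$ would become $1$"). No issues.
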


\begin{proof}
First, observe that the value of $c$ is a monotone function of the inputs. Indeed, if $c$ is not an input cell, it is an output cell of one of the comparators. Then $c$ is some threshold function of the input cells of this comparator. Namely, it is 1 if and only if sufficiently many inputs to the comparator are 1's. Such functions are monotone. Similarly, these input cells are monotone functions of some cells from the previous array, and so on.

Now, assume that we changed some input bit $x_i$ from 0 to 1. We show that if $c$ is still 0, then we still have access to it. When $x_i$ was $0$, there was some input bit establishing access to $c$. This could not be $x_i$ itself because otherwise $c$ would become $1$. So when $x_i$ was $0$, we could change some other input bit from $0$ to $1$ to make $c$ equal to $1$. By monotonicity of $c$, if we change this input bit from $0$ to $1$ when $x_i$ is already $1$, then $c$ also becomes $1$. So we still have access to $c$ when $x_i$ became equal to $1$.
\end{proof}

Since any output of a sorting network depends on all of its inputs, for any sorting network we have that $k \geqslant n^{1/d}$. Otherwise the fan-in is just not enough to connect every input to an output. We can improve this lower bound almost quadratically using access stability.

\begin{theorem}
For any sorting network with parameters $n$, $k$ and $d$ we have $k \geqslant \left(\frac{n}{2}\right)^{\frac{1}{\lceil d/2 \rceil}}$.
\end{theorem}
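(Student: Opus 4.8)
The plan is to exploit access stability from both ends of the network and meet at the array with index $m := \lceil d/2 \rceil$.

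First I would work out the structure of access on the all-zeros input $0^n$. Flipping one input bit from $0$ to $1$ switches on exactly one cell in each array, and these cells form a directed path; since the network sorts, the path must end at the last output cell $c_{d,n}$. Hence the set $A_a$ of cells of array $a$ accessible on $0^n$ is the image of a composition $g_a \circ \cdots \circ g_1$, where $g_b$ maps an index to the largest index of its layer-$b$ comparator; each fiber of each $g_b$ lies in a single comparator and so has size at most $k$, giving $|A_a| \ge n/k^a$, and every cell of $A_a$ lies on a directed path to $c_{d,n}$. The dual analysis on $1^n$ (following the cells that get switched off) gives, symmetrically, a set $B_a$ of smallest-index cells of comparators with $|B_a| \ge n/k^a$, each lying on a directed path to the first output cell $c_{d,1}$; moreover $A_a$ and $B_a$ meet only at singleton comparators, which are identity wires and may be removed beforehand.

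Now I would take $a = m = \lceil d/2 \rceil$. The $\lfloor d/2 \rfloor$ layers between array $m$ and the outputs bound the backward cone of $c_{d,n}$ at array $m$ — and likewise that of $c_{d,1}$ — by $k^{\lfloor d/2 \rfloor} \le k^{\lceil d/2 \rceil}$, and $A_m$, $B_m$ lie inside these respective cones. What remains is to derive $n \le 2k^{\lceil d/2 \rceil}$, equivalently $k \ge (n/2)^{1/\lceil d/2 \rceil}$, and this is the step I expect to be the main obstacle: combining the two sides purely through connectivity only reproduces the trivial bound $k \ge n^{1/d}$, so one has to use access stability more fully. The mechanism I would try is to let the input range over all Hamming weights $w$: by the same path argument, a cell of array $m$ reached by the $0/1$-boundary on a weight-$w$ input is forced onto a directed path to the output $c_{d,n-w}$, and as $w$ varies this boundary sweeps across (essentially) every cell of array $m$; bounding how many such forced paths can be packed into the backward cones of the outputs, and how the two ``extreme'' cones together with $A_m$ and $B_m$ account for all $n$ cells of array $m$, should yield $n \le 2k^{\lceil d/2 \rceil}$. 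Pinning this down so the exponent is $\lceil d/2 \rceil$ rather than $d$ is the delicate part.
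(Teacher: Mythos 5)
There is a genuine gap, and you have in fact flagged it yourself: everything you establish rigorously (the access sets $A_m$, $B_m$ of size at least $n/k^m$, their containment in the backward cones of $c_{d,n}$ and $c_{d,1}$) combines only into $n \le k^d$, the trivial bound. The final step --- ``bounding how many such forced paths can be packed into the backward cones'' and getting the two extreme cones to ``account for all $n$ cells of array $m$'' --- is precisely the content of the theorem, and the sketch does not supply a mechanism for it. Sweeping over Hamming weights $w$ gives, for each $w$, that the cells of array $m$ accessible on a weight-$w$ input lie in the backward cone of $c_{d,n-w}$, but there are $n+1$ such cones, each of size up to $k^{\lfloor d/2\rfloor}$, so naively summing them yields nothing; and there is no reason the two extreme cones alone should cover array $m$.

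The missing idea in the paper's proof is to argue about the \emph{forward} cone of a \emph{single} middle cell rather than about how backward cones cover array $m$. Fix one cell $c$ after layer $\lfloor d/2\rfloor$ accessible on $0^n$. Its backward cone contains at most $k^{\lfloor d/2\rfloor} < n/2$ input cells, so more than $n/2$ inputs are \emph{not} connected to $c$; flip those one by one. Each flip keeps $c$ equal to $0$, so by access stability (Lemma~\ref{lem:access_stability}) access to $c$ is preserved at every step. At step $i$, realizing that access (flipping one more input to $1$) turns on exactly one cell per array: in the middle array only $c$ changes, and in the output array the $i$th cell from the end changes, so that output cell must lie in the forward cone of $c$. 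Running this for more than $n/2$ steps forces the forward cone of $c$ to contain more than $n/2$ output cells, contradicting the bound $k^{\lceil d/2\rceil} < n/2$. Your setup (monotone propagation of a single flipped bit along a directed path, the cone-size estimates) contains the right ingredients, but without this ``pin one cell and grow its output cone'' step the exponent $\lceil d/2\rceil$ does not materialize.
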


\begin{proof}
Assume for contradiction that $k < \left(\frac{n}{2}\right)^{1/\lceil d/2 \rceil}$. Then $k^{\lceil d/2\rceil} < n/2$. On the all-zeros input, we have access to some cell $c$ after the  $\lfloor d/2 \rfloor$th layer. We start switching input cells that are not connected to $c$ from $0$ to $1$, one by one. There are more than $n/2$ of them, because
 $c$ is connected to at most $k^{\lfloor d/2 \rfloor} < n/2$ input cells. Indeed, $k$ is the arity of our network, and there are $\lfloor d/2 \rfloor$ layers of comparators between $c$ and the input cells.

Thus, we will make more than $n/2$ switches. During this, the last array (one with the output cells) is getting filled with ones. It does it in the descending order,  due to the fact that our network is sorting. Namely, the last cell of the array gets 1 first,  then the cell before the last cell, and so on. After the $i$th switch, the cell getting 1 is the $i$th one from the end.

We claim that the last  $n/2$  cells of the last array will be connected to $c$. This will be a contradiction, because $c$ can be connected to at most $k^{\lceil d/2 \rceil} < n/2$ output cells. Indeed, there are $d - \lfloor d/2 \rfloor = \lceil d/2 \rceil$ layers between $c$ and the last array, as $c$ is placed after the  $\lfloor d/2 \rfloor$th layer.

To prove this, we take any $i \le n/2$ and show that the $i$th cell from the end (of the last array) is connected to $c$. Consider the moment before the $i$th switch. Currently, the last $i - 1$ cells of the last array are 1's. We claim that we still have access to $c$. Indeed, we only changed input cells that are not connected with $c$. So $c$ is still $0$, and by Lemma \ref{lem:access_stability} we could not lose access to it. So we can change one of the input bits to make $c$ equal to 1. Then, in the last array, $i$th cell from the end gets 1. However, this cell is a function of the cells after the  $\lfloor d/2 \rfloor$th layer. Among them, only $c$ changed its value from 0 to 1 (we added one 1 to the input, so all the other arrays also got exactly one more 1). So there must be a connection between this cell from the last array and $c$.
\end{proof}

For the rest of our results we will use the following method of lower bounding the fan-in.

\begin{definition}
We call a sequence of $x^1, x^2, \ldots, x^m\in\{0, 1\}^n$ of inputs a \textbf{growing branch} if for every $1 \le i < m$ we have that $x^{i + 1}$ is obtained from $x^i$ by changing one of the input bits from $0$ to $1$.
\end{definition}

\begin{lemma} \label{lem:method}
Consider any sorting network. Assume that there exists a growing branch of length $k - 1$ such that, for every input from the branch, there exist at least 2 cells before the last layer to which we have access on this input.
Then the arity of one of the comparators from the last layer is at least $k$.
\end{lemma}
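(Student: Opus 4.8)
The plan is to single out one comparator of the last layer and show it has at least $k$ distinct input cells. Write the growing branch as $x^1,\dots,x^{k-1}$. The groundwork I would lay first is the observation that flipping one input bit from $0$ to $1$ increases the number of ones in \emph{every} array of the network by exactly one: each comparator merely permutes the values inside its own cells, so every array always holds the same multiset of values as the input. Combined with the monotonicity of cell values established inside the proof of Lemma~\ref{lem:access_stability}, this forces that along a single transition $x^i\to x^{i+1}$ exactly one cell of the $(d-1)$st array switches from $0$ to $1$, and exactly one cell of the $d$th array does so; and since the network is sorting, the latter cell is precisely the current last-zero position of the output array. Note also that the hypothesis that at least two cells before the last layer have access on $x^i$ forces $x^i$ to contain a zero, so the last-zero position of the output array is always well defined along the branch.

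Next I would prove a locality statement: if we have access to a cell $c$ before the last layer on an input $y$, then $c$ is an input cell of the unique last-layer comparator whose output cells include the last-zero position of the $d$th array on $y$. To see this, realize the access by a single flip $y\to y'$; by the previous paragraph this flip turns $c$, and only $c$ among cells of the $(d-1)$st array, from $0$ to $1$, while in the output array only the last-zero cell changes. That last-zero cell is a threshold function of the input cells of its comparator, so one of those input cells must have changed value — and the only changed cell of the $(d-1)$st array is $c$, so $c$ is an input cell of that comparator.

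Then I would show that the \emph{same} last-layer comparator $C$ serves every input of the branch. On $x^i$ there are at least two cells before the last layer with access, and by the locality statement all of them are input cells of some single comparator $C_i$. Passing to $x^{i+1}$, only one cell of the $(d-1)$st array becomes $1$, so at least one of these two access cells stays $0$ and, by Lemma~\ref{lem:access_stability}, keeps its access on $x^{i+1}$; hence it is an input cell of $C_{i+1}$ too. Since every cell lies in exactly one last-layer comparator, $C_i=C_{i+1}$, and iterating gives $C_1=\dots=C_{k-1}=:C$.

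Finally I would count input cells of $C$. For $i=1,\dots,k-2$ let $u_i$ be the cell of the $(d-1)$st array that flips at the step $x^i\to x^{i+1}$; by the locality statement $u_i$ is an input cell of $C$, and $u_1,\dots,u_{k-2}$ are pairwise distinct since $u_i$ is $0$ on $x^i$ but $1$ on every later branch input. The (at least two) cells with access before the last layer on $x^{k-1}$ are also input cells of $C$, and they differ from all the $u_i$ because they are $0$ on $x^{k-1}$ whereas each $u_i$ is $1$ there. This produces $(k-2)+2=k$ distinct input cells of $C$, so its arity is at least $k$. I expect the only genuinely delicate part to be the bookkeeping in the first two steps — pinning down that exactly one cell of each array changes and that the changed output cell is exactly the last-zero cell — since the entire counting argument hinges on that; the rest is just tracking distinctness along the branch.
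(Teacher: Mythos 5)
Your proof is correct and follows essentially the same route as the paper: you count $k-2$ cells that lose access one per step plus the two remaining at the end, and you glue the per-input comparators together via access stability and the fact that at least one access cell survives each step. Your ``locality statement'' is just a more explicit rendering of the paper's one-line observation that two access cells feeding different last-layer comparators would give access to two distinct output cells, contradicting sortedness.
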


\begin{proof}
Consider any growing branch $x^1, \ldots, x^{k - 1}\in\{0, 1\}^n$ satisfying assumptions of the lemma. Let $S_i$ be the set of cells before the last layer to which we have access on $x^i$. By assumptions of the lemma, $|S_i| \ge 2$ for every $1 \le i \le k - 1$. Set $S = S_1\cup S_2\cup\ldots \cup S_{k-1}$.
First, we show that $|S| \ge k$. Second, we show that all cells from $S$ go the same comparator in the last layer. So this comparator must have arity at least $k$.

As for the first claim, for every $i < k - 1$ consider a cell before the last layer which becomes $1$ when we switch from $x^i$ to $x^{i+1}$. On one hand, we have access to this cell on $x^i$. On the other hand, we do not have access to it on $x^{i + 1}, \ldots, x^{k - 1}$ (simply because this cell equals 1 on these inputs). Hence, for every $i < k - 1$, we have $|S_i \setminus (S_{i + 1}\cup\ldots \cup S_{k - 1})| \ge 1$. Thus,
\begin{align*}
|S| &\ge  |S_1\setminus (S_2 \cup\ldots\cup S_{k - 1})| + \ldots + |S_{k - 2}\setminus S_{k - 1}| + |S_{k - 1}| \\
&\ge k - 2 + |S_{k - 1}| \ge k.
\end{align*}

As for the second claim, note first that for every $1 \le i \le k - 1$  all cells from $S_i$ go to the same comparator.  Indeed, otherwise we have access to 2 different output cells on $x^i$.

To show that actually all cells from $S$ go to the same comparator in the last layer, we show that $S_i\cap S_{i + 1} \neq \varnothing$ for every $1 \le i < k - 1$. Indeed, $S_i$ contains at least 2 cells. One of them stays $0$ when we switch from $x^i$ to $x^{i + 1}$. By Lemma \ref{lem:access_stability}, we do not lose access to this cell after the switch. Hence, this cell is also in $S_{i + 1}$.
\end{proof}

As a simple illustration of this method, we show that the arity of any sorting network of depth $d \le 2$ is maximal possible.

\begin{proposition}
For any sorting network with parameters $n$, $k$ and $d\leq 2$ we have $k=n$.
\end{proposition}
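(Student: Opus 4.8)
The plan is to apply Lemma~\ref{lem:method} with $k=n$, which reduces the statement to constructing a growing branch of length $n-1$ such that on every input of the branch we have access to at least $2$ cells before the last layer. We first dispose of $d=1$: a depth-$1$ sorting network has a single layer whose comparators must jointly sort every input, which forces that layer to be a single comparator of arity $n$ (any nontrivial partition fails to sort an input that is ``locally sorted but globally unsorted''). So assume $d=2$, and let the network have arity $k$; we must show $k=n$.

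For $d=2$, ``cells before the last layer'' are exactly the cells of the first array (the outputs of the first layer). The branch will be the obvious one: start at the all-zeros input $x^1 = 0^n$ and let $x^{i+1}$ be obtained from $x^i$ by flipping the $i$-th coordinate from $0$ to $1$, in an order to be chosen, so that $x^i$ has exactly $i-1$ ones. On $x^i$, the input to the network has $i-1$ ones, hence (since each first-layer comparator sorts its inputs and the total number of ones is preserved) the first array also contains exactly $i-1$ ones. We need to argue that for $i \le n-2$, at least two zero-cells of the first array are ``one input-flip away from becoming $1$''. Intuitively, after the first layer the ones sit at the bottom of each comparator block; flipping a fresh input coordinate that feeds comparator $C$ will push one more one into $C$, turning the lowest zero-cell of $C$ into a one, so that zero-cell is accessed. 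As long as we can always flip an input feeding a comparator that currently has at least two zero-cells AND, simultaneously, there is a second comparator also possessing a zero-cell that can be activated, we get the required two accessed cells. A clean way to guarantee this: while there are at least two comparators in the first layer that still contain a zero, we have at least two accessed cells (one can flip a not-yet-flipped input feeding each such comparator — such an input exists since that comparator's output still has a zero, hence its input does too). This situation persists for all $i \le n-2$: with at most $n-2$ ones in the first array, at least two of the first-layer comparators are not entirely filled with ones, \emph{provided} no comparator has arity so large that it alone absorbs all but one of the ones — but that is exactly the case $k \ge n-1$ we are trying to rule out, and if some comparator has arity $n-1$ or $n$ the network trivially has $k \ge n-1$, and the arity-$(n-1)$ case is handled separately below.

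So split into cases. If some first-layer comparator has arity $n$ we are done. If the maximum first-layer arity is $n-1$, the first layer is one comparator $C$ of size $n-1$ and one singleton cell $c$; one checks this cannot sort, e.g. the input with a $1$ in $c$'s coordinate and a $0$ elsewhere stays unsorted after the first layer ($c$ holds a $1$ above $n-1$ zeros) and a single depth-$2$ layer cannot fix it — more carefully, $c$ is connected to only $\le k=n-1$ output cells, so some output cell equals $c$'s coordinate identically, contradicting sortedness. (Alternatively: the output of a sorting network depends on all inputs, and $k^2 = (n-1)^2 \ge n$ does not immediately block this, so we use the access argument, which is the more robust route.) Otherwise every first-layer comparator has arity $\le n-2$, so for every $i \le n-2$ the $i-1 \le n-3$ ones in the first array cannot fill up all comparators — at least two comparators retain a zero-cell — and by the argument above we have access to $\ge 2$ cells before the last layer on every $x^i$; choosing the flip order so that the newly flipped coordinate always feeds a comparator that still had a zero (possible since $i-1<n$ means some input coordinate is unflipped and feeds a non-full comparator), we obtain a growing branch of length $n-1$ meeting the hypothesis of Lemma~\ref{lem:method}, whence some last-layer comparator has arity $\ge n$, i.e. $k = n$.

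The main obstacle is the bookkeeping in the middle case: one must verify that the greedy flip order can always be realized — that at each step there is an unflipped input coordinate whose comparator still contains a zero, and that flipping it genuinely creates access to a zero-cell (rather than, say, being absorbed without changing any cell, which cannot happen since a comparator strictly gains a one) — and that throughout the first $n-2$ steps two distinct comparators simultaneously host accessible zero-cells. Access stability (Lemma~\ref{lem:access_stability}) is what makes the second accessed cell persist from step to step, so it is already doing the heavy lifting; what remains is the elementary but slightly fiddly verification that ``$\le n-2$ ones distributed among comparators each of size $\le n-2$ leaves $\ge 2$ comparators non-full,'' together with the separate handling of the arity-$(n-1)$ first layer.
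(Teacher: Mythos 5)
Your overall strategy for $d=2$ is the paper's: build a growing branch of length $n-1$ starting from the all-zeros input, observe that having two first-layer comparators whose outputs still contain a $0$ gives access to two cells before the last layer, and invoke Lemma~\ref{lem:method}. But the step where you guarantee that two such comparators persist is wrong. You claim that if every first-layer comparator has arity at most $n-2$ and there are at most $n-3$ ones, then at least two comparators retain a zero-cell. This is false: take first-layer comparators of sizes $2$ and $n-2$ and place both ones into the size-$2$ comparator; it is now full, and only one comparator has a zero, even though there are only $2\le n-3$ ones. Your greedy rule (``always flip an input feeding a comparator that still had a zero'') does not prevent this --- it permits exactly this run, since the size-$2$ comparator still has a zero at the moment its second input is flipped. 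So the branch you construct need not satisfy the hypothesis of Lemma~\ref{lem:method}, and your case split on whether some comparator has arity $n-1$ does not isolate the real obstruction: the ones can exhaust all but one comparator even when every comparator is small.

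The repair is essentially the opposite of your rule, and it is what the paper does: before flipping anything, \emph{reserve} one input cell in each of two distinct first-layer comparators (two comparators exist since $k<n$), and only ever flip the remaining $n-2$ inputs. The two reserved comparators then each keep a $0$ input, hence a $0$ output, on every input of the branch, which has length $n-1$ as required; no case analysis on arities is needed. Your treatment of $d=1$ and of the arity-$(n-1)$ first layer (via the fact that every output of a sorting network depends on every input) is correct, though the latter case becomes superfluous once the reservation argument is in place.
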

\begin{proof} For $d=1$, in a network with $k < n$, output cells would be disconnected with some input cells.

Suppose now there is a network with $d=2$ and $k<n$. Call an input to our sorting network \emph{proper} if at least 2 comparators from the first layer receive at least one cell with $0$ on this input. Observe that we have access to at least 2 cells before the last (that is, after the first) layer on any proper input. Indeed, if a comparator from the first layer receives at least one cell with $0$, then we have access to one of the outputs of this comparator. Now, on a proper input we have this for at least 2 different comparators.

To finish the argument, we construct a growing branch of length $n - 1$, consisting of proper inputs. By Lemma \ref{lem:method}, this shows that $k$ must be equal to $n$.

Consider the all-zeros input. It is proper because $k < n$, which means that there are at least 2 comparators in the first layer, and all of them receive only 0's.
Fix two input cells going to different comparators in the first layer. Now we can add 1's to all other cells one by one. Clearly, each input in the resulting growing branch is proper. 

This way we add $1$'s in this manner until their number is $n - 2$. The resulting growing branch of proper inputs will have length $n - 1$.
\end{proof}

\section{Depth 3}
\label{sec:3}
\begin{theorem}
The minimal $k$ for which there exists a sorting network with parameters $n$, $k$ and $d=3$ is $k = \lceil n / 2 \rceil$.
\end{theorem}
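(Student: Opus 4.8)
The statement has two halves: a depth-$3$ network with comparators of arity $\lceil n/2\rceil$, and a matching lower bound $k\ge\lceil n/2\rceil$ obtained from Lemma~\ref{lem:method}. I describe both in turn.

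\emph{Upper bound (the construction).} I give the network for even $n=2m$; for odd $n$ one first builds the network for $n+1$ inputs with arity $\lceil (n+1)/2\rceil=\lceil n/2\rceil$, hard-wires the largest coordinate to $+\infty$ and removes it from its comparator, which costs nothing in arity. For $n=2m$ the three layers are: layer~$1$, the two comparators $\{1,\dots,m\}$ and $\{m+1,\dots,2m\}$ (sort the two halves); layer~$2$, the $m$ comparators $\{i,\ 2m+1-i\}$ for $i=1,\dots,m$ (reflect one half onto the other); layer~$3$, again $\{1,\dots,m\}$ and $\{m+1,\dots,2m\}$. The arity is $m=n/2=\lceil n/2\rceil$. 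Correctness is checked on $0/1$-inputs via the zero-one principle. After layer~$1$ the first half is $0^{p}1^{m-p}$ and the second half $0^{q}1^{m-q}$; a direct computation shows layer~$2$ gathers all zeros of the array into the first half: if $p+q\le m$ the second half becomes all ones and the first half becomes $0^{p}1^{m-p-q}0^{q}$, while if $p+q>m$ the first half becomes all zeros and every zero of the array lies in the second half. In either case the two half-comparators of layer~$3$ sort the array. I expect the verification to be a short routine case split.

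\emph{Lower bound --- setup.} Assume for contradiction a depth-$3$ sorting network of arity $k\le\lceil n/2\rceil-1$. Since $k<n/2$, the layer-$1$ comparators partition the coordinates into at least $\lceil n/k\rceil\ge 3$ blocks. By Lemma~\ref{lem:method} it is enough to build a growing branch of length $\lceil n/2\rceil-1$ on which every input has at least two cells accessible after layer~$2$; this forces a last-layer comparator of arity $\ge\lceil n/2\rceil$, a contradiction. The mechanism is a local observation: flipping a $0$-input $x_i$ changes array~$1$ in exactly one cell --- the current top-$0$ output of the layer-$1$ comparator $I\ni i$ --- and hence changes array~$2$ in exactly one cell, the current top-$0$ output of the layer-$2$ comparator that this array-$1$ cell enters. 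Therefore the number of cells accessible after layer~$2$ equals the number of distinct layer-$2$ comparators that receive the current top-$0$ output of some not-yet-saturated layer-$1$ comparator, and it is $\ge 2$ unless all those top-$0$ outputs currently enter one common layer-$2$ comparator.

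\emph{Lower bound --- building the branch, and the obstacle.} We feed $1$'s into the layer-$1$ comparators; doing so advances, for each comparator, the ``pointer'' tracking which layer-$2$ comparator receives its top-$0$ output. Fix three layer-$1$ comparators. In the main case --- where at the all-zeros input these three send their top outputs to at least two distinct layer-$2$ comparators (which holds unless \emph{every} layer-$1$ comparator sends its top output to a single common $J^{*}$) --- start from all zeros and insert $1$'s while maintaining the invariant ``at least two of the three pointers are distinct''; whenever advancing a pointer would collapse the invariant, advance a different one of the three, or any comparator outside the three that is not yet saturated, instead. One can only become genuinely stuck near the very end, after all but $O(1)$ of the $n$ ones have been inserted, and a branch of that length is already $\ge n-k-1\ge\lceil n/2\rceil-1$ (using $k\le\lceil n/2\rceil-1\le\lfloor n/2\rfloor$). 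The delicate residual case --- and, I expect, the main obstacle --- is the degenerate start in which every layer-$1$ comparator routes its top output into one $J^{*}$, so the all-zeros input has only one accessible cell after layer~$2$. Here one first ``primes'' a carefully chosen layer-$1$ comparator by inserting $1$'s into it until its top-$0$ output first escapes $J^{*}$, which creates a second feeder while some untouched layer-$1$ comparator still feeds $J^{*}$, and then runs the procedure above with this primed comparator among the three being tracked; the point that the branch still reaches length $\lceil n/2\rceil-1$ uses that any layer-$1$ comparator all of whose outputs stay inside $J^{*}$ has size $\le|J^{*}|\le k$, so the priming prefix is short relative to the $n-k$ insertions still available, provided the insertions are ordered so that the primed comparator saturates last among the three. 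With such a branch in hand, Lemma~\ref{lem:method} completes the proof.
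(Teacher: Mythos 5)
Your upper bound is correct, and it is a genuinely different construction from the paper's: your arity profile is $(n/2,\,2,\,n/2)$ (sort the halves, reflect one half onto the other with $n/2$ binary comparators, sort the halves again), whereas the paper modifies Batcher's odd--even merge and gets profile $(n/2,\,n/2,\,2)$; your case split $p+q\le m$ versus $p+q>m$ checks out. Your reduction of the lower bound to a combinatorial statement about advancing, for each first-layer comparator, a ``pointer'' into the second layer is also exactly the paper's reduction (phrased there as removing coloured cubes from stacks).

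The gap is in the combinatorial core of the lower bound, which is precisely where the paper does its real work. Write $m=\lceil n/2\rceil$. First, your claim that the process ``can only become genuinely stuck near the very end, after all but $O(1)$ of the $n$ ones have been inserted'' is false, not merely unjustified. Take a valid initial configuration with four stacks: one whose top cube is red and whose other $\lfloor m/2\rfloor-1$ cubes are blue, one whose top cube is green and whose other $\lfloor m/2\rfloor-1$ cubes are blue, and two monochromatic stacks (yellow and purple) absorbing the remaining $\approx n/2$ cubes. All constraints hold (blue occurs $2\lfloor m/2\rfloor-2<m$ times, every stack has fewer than $m$ cubes), yet \emph{any} strategy, once the yellow and purple stacks are gone, must preserve the red or the green top to keep two accessible colours, and when the other special stack is emptied the surviving one still holds $\lfloor m/2\rfloor\approx n/4$ cubes; moreover, if you keep the \emph{wrong} one of the two, the loss can be even larger. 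This is what the paper's Lemma~\ref{lem:second_step} controls: it characterises the terminal configurations, bounds the endgame loss by $\lfloor m/2\rfloor+1$, and crucially drains the larger non-monochromatic stack while keeping the smaller --- a choice your greedy on three fixed stacks never makes. Second, in your degenerate case the priming bound $|J^*|\le k\approx n/2$ is too weak for the accounting: the number of good steps is at least $n-(\text{priming})-(\text{endgame loss})$, and with priming $\approx n/2$ and endgame loss $\approx n/4$ you retain only $\approx n/4$ consecutive good steps, far below the required $m-1$. The paper's Lemma~\ref{lem:first_step} reduces the priming cost to $\lfloor m/2\rfloor-1$ by comparing the two cheapest non-monochromatic stacks and invoking the colour-count bound; combined with the $\lfloor m/2\rfloor+1$ endgame bound, the accounting closes with essentially zero slack ($n-2\lfloor m/2\rfloor+1\ge\lfloor n/2\rfloor+1$). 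Both sharpenings are absent from your sketch, and the statement you rely on in their place ($O(1)$ cubes remaining when stuck) is false.
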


The rest of this section is devoted to the proof of this theorem.

\paragraph*{Upper bound}
It is enough to obtain the upper bound for even $n$. Indeed, if $n$ is odd, then just  take a sorting network with parameters $n + 1$, $k = (n + 1)/2 = \lceil n/2\rceil$, $d = 3$ and plug-in 1 into the last input cell. The last cells of all the other arrays will also be 1. So we can just ``disconnect'' all these cells from the comparators they go in. The resulting network correctly sorts all sequences from $\{0, 1\}^n$, and its arity can only be smaller than in the initial network.

From now on, we assume that $n$ is even.
We employ the Batcher's odd-even mergesort~\cite{Knuth98}. Unfortunately, it gives us arity $n/2 + 1$ when $n$ is not divisible by 4. To obtain the optimal arity, we have to modify it a bit. The idea is to merge even elements with odd elements, instead of merging even with even and odd with odd.

In more detail, we represent arrays as tables of the form:
\begin{center}
\begin{tabular}{ c c c c }
 $a_{1,1}$ & $a_{2,1}$ & $a_{3,1}$ & $\ldots$ \\ 
 $a_{1,2}$ & $a_{2,2}$ & $a_{3,2}$ & $\ldots$
\end{tabular},
\end{center}
where the ordering of the cells is as follows:
\[
a_{1,1}, a_{1,2}, a_{2,1}, a_{2,2}, a_{3,1}, a_{3,2},\ldots, a_{n/2, 1}, a_{n/2, 2}.
\]
 In the first layer, we sort both rows. Then we sort the following two sets of cells: $T_0 = \{ a_{i,j} \mid i+j \text{ is even}\}$ and $T_1 = \{ a_{i,j} \mid i+j \text{ is odd}\}$. For example, $T_0$  and $T_1$ look as follows for $n = 10$:
\begin{center}
\begin{tabular}{ c c c c c}
 $a_{1,1}\in T_0$ & $a_{2,1}\in T_1$ & $a_{3,1}\in T_0$ & $a_{4,1}\in T_1$ & $a_{5, 1}\in T_0$\\ 
 $a_{1,2}\in T_1$ & $a_{2,2}\in T_0$ & $a_{3,2}\in T_1$ & $a_{4,2}\in T_0$ & $a_{5, 2}\in T_1$
\end{tabular}.
\end{center}
Finally, in the third layer, we sort all columns, that is, $a_{1, 1}$ and $a_{1,2}$, $a_{2,1}$ and $a_{2,2}$, and so on. Description of the network is finished. We note that the arity of its first two layers is $n/2$, while the arity of the last layers is just $2$.

We now prove that this network is sorting. First, we show that the number of 0's in $T_0$ and the number of 0's in $T_1$ always differ by at most 1 after the first layer. 
For this consider two rows separately. Both rows are sorted after the first layer. In the first row,
 the number of 0's in $T_0$ is either equal to the number of 0's in $T_1$ (if the number of 0's in the row is even), or is greater by 1 (if the number of 0's in the row is odd).
For the second row, 
 the number of 0's in $T_0$ is either equal to the number of 0's in $T_1$ or is less by 1. In total, the number of 0's in $T_0$ and $T_1$ differs by at most 1.

After the second layer,  $T_1$ and $T_0$ are sorted. This means that if we go from left to right, first we see some number of columns with two 0's. Then there might be a column with 0 and 1, but only if the number of 1's in $T_1$ and the number of 1's in $T_0$ differ by 1. After this, there are only columns with two 1's. Therefore, our array is already almost sorted. The only problem is that the column with 0 and 1 might have 1 on the top. But this will be fixed after the third layer.

\paragraph*{Lower bound}
For any sorting network of depth 3 we show the following: if all its comparators from the first and the second layer are of arity less than $\lceil n/2\rceil$, then one of the comparators from the last layer has arity at least\footnote{This claim is quite tight:  we gave a construction in which the arity of the first two layers was  $\lceil n/2\rceil$, while the arity of the third layer was just 2.} $\lfloor n/2\rfloor + 2$. Since $\lfloor n/2\rfloor + 2 > \lceil n/2\rceil$, this establishes our lower bound.

We use Lemma \ref{lem:method}.
First, we consider the all-zeroes input. Then we start adding 1's to it. This gives a growing branch of inputs. We will do this in such a way that for at least $\lfloor n/2\rfloor + 1$ inputs in a row, there will be access to at least 2 cells before the last layer.

To have access to 2 different cells before the last layer (that is, after the second layer), it is sufficient to have access to 2 cells after the first layer that go to different second-layer comparators. Indeed, if we have access to some cell $c$, then we also have access to one of the outputs of the comparator $c$ goes to (namely, to the last $0$-output of this comparator). So we reduced our task to the following. For at least $\lfloor n/2\rfloor + 1$ inputs in a row, we should have access to 2 cells after the first layer that go to different second-layer comparators.

Take any input. Consider cells after the first layer that get value $0$ on this input. In the argument, we represent these cells as \emph{colored cubes} that are arranged in vertical stacks (see Figure~\ref{fig:cubes}). Namely, for each first-layer comparator $C$  there is a stack which consists of cubes corresponding to $0$-outputs of $C$.  In each stack, the order in which cubes go from the bottom to the top is the same as the order of the corresponding cells. In particular, on top of each stack we have the last $0$-output of the comparator in question.

\begin{figure}[h]
\centering

\begin{tikzpicture}[square/.style={regular polygon,regular polygon sides=4},scale=0.8,every node/.style={scale=0.8}]

\draw[->, line width=1mm] (-1,0) -- (-1,7);

	\node at (-2,0) [draw=none] (0) {};
\node at (1.5,7.5) [draw=none] (a) {1st layer};
\node at (4.5,7.5) [draw=none] (b) {2nd layer};
\node at (7.5,7.5) [draw=none] (c) {3rd layer};

        \node at (0,0) [square,draw,thick,minimum width=1.3cm] (l11) {$1$};
        \node at (0,1) [square,draw, thick, minimum width=1.3cm] (l12) {$0$};
        \node at (0,2) [square,draw,thick, minimum width=1.3cm] (l13) {$0$};
        \node at (0,3) [square,draw,thick,minimum width=1.3cm] (l14) {$1$};
        \node at (0,4) [square,draw,thick,minimum width=1.3cm] (l15) {$0$};
        \node at (0,5) [square,draw,thick,minimum width=1.3cm] (l16) {$1$};
        \node at (0,6) [square,draw,thick,minimum width=1.3cm] (l17) {$0$};
        \node at (0,7) [square,draw,thick,minimum width=1.3cm] (l18) {$0$};

		 \node at (3,0) [square,draw,thick,minimum width=1.3cm] (l21) {$0$};
        \node at (3,1) [square,draw, thick, minimum width=1.3cm] (l22) {$0$};
        \node at (3,2) [square,draw,thick, minimum width=1.3cm] (l23) {$1$};
        \node at (3,3) [square,draw,thick,minimum width=1.3cm] (l24) {$1$};
        \node at (3,4) [square,draw,thick,minimum width=1.3cm] (l25) {$0$};
        \node at (3,5) [square,draw,thick,minimum width=1.3cm] (l26) {$0$};
        \node at (3,6) [square,draw,thick,minimum width=1.3cm] (l27) {$0$};
        \node at (3,7) [square,draw,thick,minimum width=1.3cm] (l28) {$1$};

 \draw [-] (l11) -- (l21) node [midway, draw,circle,fill=one] {};
 \draw [-] (l12) -- (l22) node [midway, draw,circle,fill=one] {};
 \draw [-] (l13) -- (l23) node [midway, draw,circle,fill=one] {};
 \draw [-] (l14) -- (l24) node [midway, draw,circle,fill=one] {};
 \draw [-] (l15) -- (l25) node [midway, draw,circle,fill=two] {};
 \draw [-] (l16) -- (l26) node [midway, draw,circle,fill=two] {};
 \draw [-] (l17) -- (l27) node [midway, draw,circle,fill=two] {};
 \draw [-] (l18) -- (l28) node [midway, draw,circle,fill=two] {};

		 \node at (6,0) [square,draw,thick,minimum width=1.3cm] (l31) {$0$};
        \node at (6,1) [square,draw, thick, minimum width=1.3cm] (l32) {$0$};
        \node at (6,2) [square,draw,thick, minimum width=1.3cm] (l33) {$0$};
        \node at (6,3) [square,draw,thick,minimum width=1.3cm] (l34) {$1$};
        \node at (6,4) [square,draw,thick,minimum width=1.3cm] (l35) {$0$};
        \node at (6,5) [square,draw,thick,minimum width=1.3cm] (l36) {$0$};
        \node at (6,6) [square,draw,thick,minimum width=1.3cm] (l37) {$1$};
        \node at (6,7) [square,draw,thick,minimum width=1.3cm] (l38) {$1$};

 \draw [-] (l21) -- (l31) node [midway, draw,circle,fill=one] {};
 \draw [-] (l22) -- (l32) node [midway, draw,circle,fill=one] {};
 \draw [-] (l23) -- (l33) node [midway, draw,circle,fill=two] {};
 \draw [-] (l24) -- (l34) node [midway, draw,circle,fill=two] {};
 \draw [-] (l25) -- (l35) node [midway, draw,circle,fill=one] {};
 \draw [-] (l26) -- (l36) node [midway, draw,circle,fill=one] {};
 \draw [-] (l27) -- (l37) node [midway, draw,circle,fill=two] {};
 \draw [-] (l28) -- (l38) node [midway, draw,circle,fill=two] {};

\node at (9,0) [square,draw,thick,minimum width=1.3cm] (l41) {$0$};
        \node at (9,1) [square,draw, thick, minimum width=1.3cm] (l42) {$0$};
        \node at (9,2) [square,draw,thick, minimum width=1.3cm] (l43) {$0$};
        \node at (9,3) [square,draw,thick,minimum width=1.3cm] (l44) {$0$};
        \node at (9,4) [square,draw,thick,minimum width=1.3cm] (l45) {$0$};
        \node at (9,5) [square,draw,thick,minimum width=1.3cm] (l46) {$1$};
        \node at (9,6) [square,draw,thick,minimum width=1.3cm] (l47) {$1$};
        \node at (9,7) [square,draw,thick,minimum width=1.3cm] (l48) {$1$};

 \draw [-] (l31) -- (l41) node [midway, draw,circle,fill=one] {};
 \draw [-] (l32) -- (l42) node [midway, draw,circle,fill=one] {};
 \draw [-] (l33) -- (l43) node [midway, draw,circle,fill=two] {};
 \draw [-] (l34) -- (l44) node [midway, draw,circle,fill=two] {};
 \draw [-] (l35) -- (l45) node [midway, draw,circle,fill=two] {};
 \draw [-] (l36) -- (l46) node [midway, draw,circle,fill=two] {};
 \draw [-] (l37) -- (l47) node [midway, draw,circle,fill=one] {};
 \draw [-] (l38) -- (l48) node [midway, draw,circle,fill=one] {};

            \cube[fill=one]{4,-5,0}{1}
	\cube[fill=one]{4,-4,0}{1}

\cube[fill=one]{6,-5,0}{1}
	\cube[fill=one]{6,-4,0}{1}
	\cube[fill=two]{6,-3,0}{1}
   
    \end{tikzpicture}

\caption{Stacks of cubes. The first stack comes from the dark gray comparator in the first layer. It has two 0-outputs that both go to dark gray comparator in the second layer. So both cubes in the first stack are dark gray. The second stack comes from the light gray comparator in the first layer. One of its 0-outputs (one with the largest index) goes to the light gray comparator in the second layer. So the top cube in the second stack is light gray.}
\label{fig:cubes}
\end{figure}

To obtain a coloring of these cubes, we first assign a distinct color to each  comparator from the second layer. Then, given a cube, we consider the corresponding cell, look at the comparator this cell goes to in the second layer, and color our cube into the color of this comparator.

We have access to a cell after the first layer if and only if it is the last 0-output of some first-layer comparator. Such cells are represented by the top cubes of our stacks. Recall that we want to have access to 2 cells after the first layer that go to different second-layer
comparators. This means to have \emph{2 top cubes of different colors}.

Now, adding 1's to our input is equivalent to removing the top cubes one by one in some order. Indeed, assume we change one of the input cells from $0$ to $1$. Consider a first-layer comparator $C$ this cell goes to. The last $0$-output of $C$ now gets $1$. All the other cells after the first layer do not change. This just means that the top cube of the stack assigned to $C$ gets removed.

Since the arity of any first-layer comparator is less than $\lceil n/2\rceil$, each vertical stack has less than $\lceil n/2\rceil$ cubes. Similarly, since the arity of any second-layer comparator is less than $\lceil n/2\rceil$, for each color there are less than $\lceil n/2\rceil$ cubes of this color. Thus, our task reduces to the following \emph{Cubes problem}.

\medskip

\textbf{Cubes problem.} There are $n$ cubes, arranged in several vertical stacks. Each cube has a color. Each stack has less than $\lceil n/2\rceil$ cubes (or equivalently, less than half of the cubes). For each color, there are less than $\lceil n/2\rceil$ cubes (or equivalently, less than half of the cubes) of this color.

In one step, we can remove a cube from the top of one of the stacks. Show that there exists a way of removing cubes such that at least $\lfloor n/2\rfloor + 1$ times in a row, not necessarily right from the start, there exist 2 top cubes of different colors (we stress that at different moments this might be different cubes).

\medskip

\textbf{Solution of the problem.}
 We call a stack \emph{monochromatic} if all its cubes are of the same color. 
For notation convenience, we set $m = \lceil n/2\rceil$.
 Our argument can be split into the following two lemmas.
\begin{lemma}
\label{lem:first_step}
There is a way of removing at most $\lfloor m/2\rfloor - 1$ cubes from the initial configuration of cubes such that as the result we either have that
\begin{equation}\label{eqn:A}
   \mbox{there exist 2 monochromatic stacks of different colors;}\tag{A}
  \end{equation}
or that
\begin{align}\label{eqn:B}
\tag{B}
\begin{cases}
&\mbox{there exist 2 top cubes of different colors,}\\
&\mbox{there are at least 3 stacks},\\ 
&\mbox{and at least 2 stacks are non-monochrormatic.}
\end{cases}
\end{align}
\end{lemma}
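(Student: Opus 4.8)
The plan is to first pin down the structure of the initial configuration, then reduce to the subcase in which all top cubes share a color, and finally perform one carefully chosen block of removals. I begin with two preliminary observations. Since every stack has fewer than $m=\lceil n/2\rceil$ cubes, there are at least three stacks: two stacks could not together contain all $n$ cubes while each contains fewer than $\lceil n/2\rceil$. Next, if the initial configuration already has two monochromatic stacks of different colors then condition (A) holds after zero removals, so assume it does not; then all monochromatic stacks (if any) have one common color $c^\ast$. I claim there are then at least two non-monochromatic stacks: otherwise at least two of the $\ge 3$ stacks are monochromatic of color $c^\ast$, so, letting $q<m$ be the number of cubes lying in non-monochromatic stacks (there being at most one such stack), color $c^\ast$ occurs at least $n-q\ge n-m+1$ times, and combined with the hypothesis that each color occurs fewer than $m$ times this gives $n-m+1\le m-1$, i.e.\ $n+2\le 2\lceil n/2\rceil\le n+1$, a contradiction. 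So henceforth there are at least three stacks and at least two non-monochromatic ones.

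If the tops of the stacks already display two different colors, then condition (B) holds after zero removals. Otherwise every top cube has one and the same color $c$; in particular every monochromatic stack has color $c$, so $c^\ast=c$. Among the (at least two) non-monochromatic stacks pick one, $A$, whose top run of $c$'s is shortest, say of length $\ell$, and let $B$ be a second non-monochromatic stack. The key estimate is that the top $c$-run of $A$, the top $c$-run of $B$, and the top cube of some third stack are pairwise distinct and all of color $c$, so there are at least $2\ell+1$ cubes of color $c$ (here using that $B$'s top $c$-run is at least as long as $A$'s); since color $c$ occurs fewer than $m$ times, $2\ell+1\le m-1$, hence $\ell\le\lfloor m/2\rfloor-1$. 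Now remove exactly the $\ell$ top cubes of $A$; this respects the budget and changes only the stack $A$.

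After the removal the new top of $A$ has some color $c'\ne c$ (such a cube exists because $A$ was non-monochromatic), while every other stack is untouched. If $A$ is still non-monochromatic, then $A$ and $B$ witness at least two non-monochromatic stacks, there are still at least three stacks, and the tops $c'$ and $c$ are two different colors, so condition (B) holds. If instead $A$ has become monochromatic — necessarily of color $c'$ — then I split once more: if the original configuration contained a monochromatic stack, it has color $c^\ast=c\ne c'$, so it and $A$ form two monochromatic stacks of different colors and condition (A) holds; if it contained none, then all of the $\ge 3$ original stacks were non-monochromatic, so at least two of them (all except $A$) are still non-monochromatic, and the tops $c'$ and $c$ again yield condition (B).

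The main obstacle is the counting step that yields $\ell\le\lfloor m/2\rfloor-1$ rather than the weaker bound $\ell<m/2$; extracting the extra $-1$ forces one to exhibit a third cube of color $c$ disjoint from the two top runs, which is precisely where the ``at least three stacks'' observation gets used. A secondary point requiring care is the final case split: one must check that $A$ does not disappear (its surviving $c'$-run is nonempty) and that any pre-existing monochromatic stack genuinely has color $c$, so that it indeed differs from $c'$.
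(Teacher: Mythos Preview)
Your proof is correct and follows essentially the same approach as the paper's: both reduce to the case of uniformly colored tops, locate the non-monochromatic stack with the shortest top monochromatic run, bound that run's length via the third stack's top cube, and then do the same \eqref{eqn:A}/\eqref{eqn:B} verification after removal. The only cosmetic difference is the final case split---you branch on whether $A$ stays non-monochromatic and whether the original configuration had any monochromatic stack, whereas the paper branches directly on the number of non-monochromatic stacks remaining---but these partitions cover exactly the same situations.
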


\begin{lemma} 
\label{lem:second_step}
Consider any configuration of cubes which satisfies either \eqref{eqn:A} or \eqref{eqn:B}. There exists an algorithm of removing cubes from this configuration for which the following holds: \textbf{(a)} when the algorithm stops, at most $\lfloor m/2\rfloor + 1$ cubes are left; \textbf{(b)} throughout the execution of the algorithm (in particular, when the algorithm stops), there exist 2 top cubes of different colors.
\end{lemma}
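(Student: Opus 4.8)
The plan is to prove \cref{lem:second_step} by treating the two hypotheses \eqref{eqn:A} and \eqref{eqn:B} separately, with \eqref{eqn:B} carrying essentially all of the difficulty.

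\textbf{Case \eqref{eqn:A}.} Suppose there are two monochromatic stacks $P$ and $Q$ of distinct colours $p\neq q$. As long as both are nonempty, these two stacks by themselves witness two top cubes of different colours. So the algorithm is: remove the top cubes of all other stacks, in any order, until only $P$ and $Q$ remain; then remove cubes from $P$ and $Q$, again in any order, until each consists of a single cube. Throughout, $P$ has a $p$ on top and $Q$ has a $q$ on top, so \textbf{(b)} holds; at the end exactly $2\leq\lfloor m/2\rfloor+1$ cubes remain, so \textbf{(a)} holds. (For the tiny values of $n$ with $\lfloor m/2\rfloor+1<2$ the hypotheses cannot be met, so there is nothing to prove.)

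\textbf{Case \eqref{eqn:B}: the target configuration.} Fix two non-monochromatic stacks $A$ and $B$ and let $\alpha,\beta$ be the colours of their bottom cubes. I first describe the configuration the algorithm aims for and bound its size. If $\alpha\neq\beta$, the target is: $A$ and $B$ each cut down to their single bottom cube, all other stacks emptied --- $2$ cubes, with tops of colours $\alpha\ne\beta$. If $\alpha=\beta=:\gamma$, let $\rho_A,\rho_B$ be the lengths of the maximal bottom runs of colour $\gamma$ in $A$ and $B$; since $A,B$ are non-monochromatic these runs end strictly below the tops, and $\rho_A+\rho_B$ is at most the total number of $\gamma$-cubes, which is $<m$. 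Assume WLOG $\rho_A\leq\rho_B$, so $\rho_A\leq\lfloor(m-1)/2\rfloor$. If $\rho_A\leq\lfloor m/2\rfloor-1$, the target is: $A$ cut down to its bottom run of $\gamma$'s together with the (non-$\gamma$) cube just above it, $B$ cut down to its bottom cube, all other stacks emptied; this has $\rho_A+2\leq\lfloor m/2\rfloor+1$ cubes and tops of colours ``non-$\gamma$'' and $\gamma$. The only remaining possibility, $\lfloor m/2\rfloor\leq\rho_A\leq\lfloor(m-1)/2\rfloor$, forces $m$ odd, $\rho_A=\rho_B=(m-1)/2$, and equality $\rho_A+\rho_B=m-1$ in the count of $\gamma$-cubes; hence every $\gamma$-cube lies in the bottom run of $A$ or of $B$, so any third stack $C$ (one exists, since there are at least $3$ stacks) is entirely $\gamma$-free. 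Here the target is: $A$ cut down to its bottom cube $\gamma$, $C$ cut down to a single (non-$\gamma$) cube, all else emptied --- $2$ cubes. In all cases the target has at most $\lfloor m/2\rfloor+1$ cubes and exhibits two top cubes of different colours, giving \textbf{(a)} and the endpoint instance of \textbf{(b)}.

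\textbf{Case \eqref{eqn:B}: ordering the removals.} It remains to schedule the removals so that \textbf{(b)} holds at \emph{every} intermediate step, and this is the main obstacle: a careless removal can momentarily collapse all exposed colours to one. The plan is to maintain an explicit \emph{witness} --- an ordered pair of stacks whose current top cubes have distinct colours --- and never to touch a witness stack until a replacement witness has been installed. A replacement is always available because a non-monochromatic stack can be shrunk cube by cube until its top colour changes, and $A$, $B$ (and, in the last sub-case, $C$) are precisely such reservoirs, which the targets above keep intact until the very end; so, whenever the next planned removal would destroy the current witness, one first performs a bounded number of removals inside a reservoir to re-expose a colour differing from the other reservoir's bottom colour, installs the new witness, and only then continues. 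Verifying this schedule in full --- in particular, that one never needs to shrink $A$ or $B$ past the cube the target requires --- is where the work lies, but it is elementary. (Combined with \cref{lem:first_step}, this yields a good run of length at least $\bigl(n-(\lfloor m/2\rfloor-1)\bigr)-(\lfloor m/2\rfloor+1)+1=n-2\lfloor m/2\rfloor+1\geq\lfloor n/2\rfloor+1$ for every $n$, which solves the Cubes problem.)
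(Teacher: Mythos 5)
Your case \eqref{eqn:A} is fine, and your size analysis of the target configurations in case \eqref{eqn:B} is correct and closely parallels what the paper does (the paper's ``terminal'' configurations are essentially your targets: a single cube of colour $c$ plus two stacks whose non-top cubes are all $c$, one of which has size at most $\lfloor m/2\rfloor$, giving the same bound $\lfloor m/2\rfloor+1$). But there is a genuine gap: the entire content of the lemma is the \emph{scheduling} claim --- that one can order the removals so that two distinct top colours are present at every intermediate step --- and this is exactly the part you defer with ``Verifying this schedule in full \dots is where the work lies, but it is elementary.'' It is not elementary, and your witness/reservoir sketch does not pin it down. For instance, suppose $A=[\gamma,\delta]$ and $B=[\gamma,\delta,\delta,\delta]$ (bottom to top) with target $A$ untouched and $B$ cut to $[\gamma]$: while $B$ is being shrunk through its $\delta$'s, both reservoirs show $\delta$ on top, so the witness must live on a third stack with a non-$\delta$ top; but those third stacks must themselves eventually be emptied, and their exposed colours change unpredictably as you do so. Deciding which stack to touch next so that no step collapses all exposed colours to one is the actual difficulty, and your rule (``first perform a bounded number of removals inside a reservoir to re-expose a differing colour'') can itself destroy the witness mid-way when the reservoir being shrunk is part of the current witness.

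The paper resolves this with a different mechanism that you would need to reproduce or replace: a greedy algorithm that at each step removes a cube so as to preserve the disjunction ``\eqref{eqn:A} or \eqref{eqn:B}'' (a strictly stronger invariant than ``two distinct top colours''), together with a case analysis showing that any configuration from which no such removal exists is terminal, hence small. The strengthened invariant is what makes the local greedy choice always available until the very end; your weaker invariant (a single witness pair) does not obviously admit such a local argument. To complete your proof you would need either to carry out the full scheduling verification you postponed, or to adopt an invariant-based argument of the paper's kind.
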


Let us explain why these two lemmas imply a solution to our problem. First, using Lemma \ref{lem:first_step}, we remove  $s\le \lfloor m/2\rfloor - 1$ cubes so that now our configuration satisfies  either \eqref{eqn:A} or \eqref{eqn:B}. There are $n - s$ cubes left. Then we apply the algorithm of Lemma \ref{lem:second_step}. When it finishes, there are $r \le \lfloor m/2 \rfloor + 1$ cubes left.  So $t = n - s - r + 1$ times in a row we had 2 top cubes of different colors. It remains to show that $t\ge  \lfloor n/2\rfloor + 1$. Indeed, 
\begin{align*}
t &\ge n -  (\lfloor m/2\rfloor - 1) -   (\lfloor m/2\rfloor + 1) + 1 \\
&= n + 1 - 2\cdot\lfloor m/2\rfloor \ge n + 1 - m = n + 1 - \lceil n/2 \rceil  \\
&= \lfloor n/2\rfloor + 1.
\end{align*}

\begin{proof}[Proof of Lemma \ref{lem:first_step}]
If the initial configuration already satisfies \eqref{eqn:A}, there is nothing left to do. Assume from now on that it does not. Then all the monochromatic stacks are of the same color. This implies that there are at least 2 non-monochromatic stacks. Indeed, otherwise there is at most one non-monochromatic stack, it has less than half of the cubes, so the color of the monochromatic stacks occupies more than half of the cubes.

Note, that there are at least 3 different stacks because each stack has less than half of the cubes

Now, if initially there exist 2 top cubes of different colors, then the initial configuration already satisfies \eqref{eqn:B}. Assume from now one that initially all top cubes have the same color. Take any 2 non-monochromatic stacks. Both of them contain a cube whose color differs from the top cubes. In both stacks, mark the highest cube with this property (so that all cubes above them have the same color as all the top cubes). Among the two marked cubes select one which has the least distance to the top (that is, one which has the least number of cubes above it in its stack). Remove all the cubes above the selected one. Now it is on the top. It remains to explain two things: why do we remove at most $\lfloor m/2\rfloor - 1$ cubes, and why does the resulting configuration satisfy either \eqref{eqn:A} or \eqref{eqn:B}.

\textbf{Why do we remove at most $\lfloor m/2\rfloor - 1$ cubes.} Consider cubes above the marked ones. It is sufficient to show that there are at most $m - 2$ of them (then one of the marked cubes has at most $\lfloor(m - 2)/2\rfloor = \lfloor m/2\rfloor - 1$ cubes above it). Assume for contradiction that there are at least $m - 1$ of them. As we pointed out, all of them have the same color as all the top cubes.
Besides the two stacks under consideration, there exists at least one more stack, and its top cube also has the same color. So we already have at least $(m - 1) + 1 = \lceil n/2\rceil$ cubes of the same color, which is impossible.

\textbf{Why does the resulting configuration satisfy either \eqref{eqn:A} or \eqref{eqn:B}}. Due to the selected cube,  now we have cubes of different colors on the top.
There are still at least 3 non-empty stacks, because we did not make any of the stacks empty. If there are at least 2 non-monochromatic stacks, then we have \eqref{eqn:B}. Assume now that there is at most 1 non-monochromatic stack.  Initially, we had at least 2 non-monochromatic stacks. Hence, the stack with the selected cube became monochromatic (other stacks did not change). There are at least 3 stacks, at most 1 is non-monochromatic, so besides the stack with the selected cube there is at least one more monochromatic stack $S$. The top cube of $S$ was on the top initially, therefore its color differs from the color of the selected cube. So $S$ and the stack with the selected cube are two monochromatic stacks of different colors. Hence, we have \eqref{eqn:A}.
\end{proof}

\begin{proof}[Proof of Lemma \ref{lem:second_step}]
We call a configuration of cubes \emph{terminal} if one of the following conditions holds:
\begin{itemize}
\item there are at most 2 cubes;
\item there are exactly 3 non-empty stacks, one with exactly one cube (let the color of this cube be $c$), and each of the other 2 stacks is as follows: it has at least 2 cubes, the color of the top cube is different from $c$, and all the other cubes of this stack are colored in $c$.
\end{itemize}

We show that it is possible to reach a terminal configuration while maintaining a property that there exist 2 top cubes of different colors. Let us explain why is this sufficient to establish the lemma. If there are at most 2 cubes, then, since $\lceil m/2\rceil + 1 
\ge 2$, there is nothing left to do\footnote{Here we assume that $n\geq 2$, but sorting networks for $n=1$ do not make much sense.}. Assume now that the second condition from the definition of a terminal configuration holds. There are at most $m - 1$ cubes colored in $c$. Hence, one of the non-monochromatic stacks has at most $\lfloor (m - 2)/2 \rfloor = \lfloor m/2\rfloor - 1$ cubes colored in $c$. On the other hand, all its non-top cubes are colored in $c$, so its size is at most $\lfloor m/2\rfloor$. This non-monochromatic stack and  the stack of size 1 provide 2 top cubes of different colors, regardless of what happens with the third stack. So if remove all cubes from the third stack, we will be left with at most $\lfloor m/2\rfloor + 1$ cubes, as required.

Now we show how to reach a terminal configuration. Our algorithm repeatedly does the following. In any configuration, it first tries to remove a cube in such a way that the resulting configuration satisfies \eqref{eqn:A}. If this is impossible, the algorithm tries to remove a cube in such a way that the resulting configuration satisfies \eqref{eqn:B}. If this is impossible as well, the algorithm ``gets stuck''.

The algorithm maintains the OR of \eqref{eqn:A} and \eqref{eqn:B}. So it also maintains the existence of 2 top cubes of different colors, as both \eqref{eqn:A} and \eqref{eqn:B} imply this.
We now show that if $C$ is a configuration after the algorithm got stuck, then $C$ is terminal. We know that $C$ satisfies either \eqref{eqn:A} or \eqref{eqn:B}, but no cube can be removed from it in such a way that \eqref{eqn:A} or \eqref{eqn:B} still holds.

If $C$ satisfies \eqref{eqn:A}, then there are at most 2 cubes. Indeed, consider two monochromatic stacks of different colors. If one of these stacks has at least 2 cubes, then we can take the top cube from it, and we will still have \eqref{eqn:A}, which is impossible. So both stacks are of size 1. If there were other stacks, we could take something from them, and we would still have \eqref{eqn:A}. This shows that there are just 2 cubes.

Assume now that $C$ satisfies \eqref{eqn:B} but not \eqref{eqn:A}. We show that $C$ satisfies the second condition of the definition of a terminal configuration.

Due to \eqref{eqn:B}, there are at least 2 non-monochromatic stacks in $C$. We claim that there are exactly 2 non-monochromatic stacks. Indeed, assume for contradiction that there are at least 3 non-monochromatic stacks. It is possible to remove the top cube from one of them so that on the top we still have 2 different colors. We will also have at least 2 non-monochromatic stacks and at least 3 stacks in total (one cannot make a non-monochromatic stack empty in one step).
So will still have \eqref{eqn:B}, which is impossible.

So in $C$ there are exactly 2 non-monochromatic stacks. As in $C$ there are at least 3 stacks, $C$ must have monochromatic stacks. Since $C$ does not satisfy \eqref{eqn:A}, all monochromatic stacks of $C$ are of the same colors. We claim that there is exactly 1 monochromatic stack, and it has exactly 1 cube (let $c$ denote its color). Indeed, assume that this is not the case. Then take something from the monochromatic stacks. There will still be 2 non--monochromatic stacks and something else. The set of colors on the top will not change  because all monochromatic stacks are of the same color. So our configuration will still satisfy \eqref{eqn:B}, which is impossible.

To summarize, $C$ consists of  1 separate cube of color $c$ and 2 non-monochromatic stacks. Denote these two stacks by $S_1$ and $S_2$. Since $C$ satisfies \eqref{eqn:B}, on the top we have 2 cubes of different colors. W.l.o.g., the top cube of $S_1$ is colored not in $c$. Remove the top cube of $S_2$. We still have 2 top cubes of different colors. Since the resulting configuration does not satisfy \eqref{eqn:B}, the stack $S_2$ must have become monochromatic. Moreover, its  must be now colored in $c$, because otherwise the resulting configuration satisfies \eqref{eqn:A}. In other words, all the cubes of $S_2$, except the top one, must be colored in $c$. The top cube of $S_2$ must be colored differently from $c$, because $S_2$ is non-monochromatic. So $S_2$ is as in the definition of a terminal configuration.

Recall that we have shown this for $S_2$ assuming that the color of the top cube of $S_1$ differs from $c$. We have established that the color of the top cube of $S_2$ differs from $c$. Hence, we can now apply a similar argument to $S_1$ to show that this stack is also as in the definition of a terminal configuration.
\end{proof}

\section{Depth 4}
\label{sec:4}
\begin{theorem}
The minimal $k$ for which there exists a sorting network with parameters $n$, $k$ and $d=4$ is $k = \Theta(n^{2/3})$.
\end{theorem}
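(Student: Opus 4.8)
The plan is to prove matching upper and lower bounds, $k = O(n^{2/3})$ and $k = \Omega(n^{2/3})$, treating them separately.

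\paragraph*{Upper bound.} For the construction I would unfold a recursive merging scheme so that it collapses to exactly four layers with arity $O(n^{2/3})$. The natural idea, following the recursive philosophy mentioned in the introduction and the construction implicit in \cite[Section 2]{Leighton85}, is to split the $n$ inputs into $n^{1/3}$ blocks of size $n^{2/3}$. The first layer sorts each block (arity $n^{2/3}$). After that one needs a depth-$3$ network that merges $n^{1/3}$ sorted sequences of length $n^{2/3}$ into one sorted sequence, using comparators of arity $O(n^{2/3})$. I would build this merger in the spirit of odd-even merge / columnsort: arrange the data in an $n^{2/3}\times n^{1/3}$ matrix whose columns are the already-sorted blocks, then in layer $2$ sort a family of $O(n^{1/3})$ ``rows'' (each of size $n^{1/3}$, padded or grouped so that arity stays $O(n^{2/3})$ — actually one groups $\Theta(n^{1/3})$ rows together to form comparators of size $\Theta(n^{2/3})$), in layer $3$ re-sort the columns, and in layer $4$ clean up a bounded ``dirty'' region with small comparators. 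Correctness is proved via the zero-one principle: after the block sort each column has a single $0/1$ boundary; after sorting the grouped rows and re-sorting columns the arrangement is sorted except within a window of $O(n^{1/3})$ consecutive positions, which the final layer of short comparators fixes. The bookkeeping of exactly which index sets to use so that all four layers have arity $O(n^{2/3})$ and the dirty window is short is the only delicate part here, but it is essentially the Leighton columnsort analysis specialized to depth $4$.

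\paragraph*{Lower bound.} For $k = \Omega(n^{2/3})$ I would use the access machinery from Section~\ref{sec:techniques}, pushing the idea of Lemma~\ref{lem:method} one layer deeper. Suppose for contradiction that all comparators have arity $k = o(n^{2/3})$. Starting from the all-zeros input I would build a long growing branch and argue that along a constant fraction of it — at least $k$ consecutive inputs — there are at least $2$ cells before the last layer to which we have access; then Lemma~\ref{lem:method} forces a last-layer comparator of arity $\geq k+1$, a contradiction. The key is to maintain access to two cells after the \emph{third} layer for many steps. Each such cell is fed by $\le k$ cells after the second layer, each of those by $\le k$ cells after the first, so a single third-layer cell ``sees'' only $\le k^3 = o(n)$ input cells; thus on the all-zeros input, picking a cell $c$ with access after layer~$3$, there are $n - o(n)$ input cells not connected to $c$, and switching those to $1$ one at a time keeps $c$ at $0$ and (by Lemma~\ref{lem:access_stability}) keeps access to $c$. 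To get a \emph{second} accessible cell simultaneously, I would argue that as the network fills with $1$'s the output array fills in descending order, and at each step the newly-set output cell is a function of third-layer cells exactly one of which flipped — giving a second accessed third-layer cell — unless that cell coincides with $c$ or its comparator already exhausted its $0$-outputs; a counting argument over the at most $k$ many third-layer cells in $c$'s comparator and the at most $k^{2}$ second/first-layer ancestors bounds the number of ``bad'' steps by $O(k^2) = o(n)$, still leaving $\gtrsim n$ good steps, far more than the $k-1$ needed.

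\paragraph*{Main obstacle.} I expect the lower bound to be the harder half and the real crux, because the clean two-accessible-cells invariant of Lemma~\ref{lem:method} does not by itself survive the extra layer: one must control the interference between the ``frozen'' cell $c$ and the second, moving accessed cell, and in particular rule out the situation where the second accessed cell keeps falling into the same last-layer comparator as $c$'s descendants for only a few steps before the branch is forced to stop. Making precise the combinatorial claim ``for $\Omega(n)$ steps there are two accessed third-layer cells whose last-layer comparators coincide and stay coincident'' — essentially re-running the $S_i \cap S_{i+1}\ne\varnothing$ argument of Lemma~\ref{lem:method} at the level of third-layer cells, with the connectivity bounds $k, k^2, k^3$ controlling all the exceptional steps — is where the work is, and it is what forces the exponent $2/3$ (we lose one factor of $k$ to connectivity from $c$ down to the inputs, one to connectivity from $c$ up to the outputs, and the third layer of $k$'s is spent inside the final comparator), matching the construction.
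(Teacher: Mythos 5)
Your upper bound sketch is essentially the paper's: the construction is Leighton's ColumnSort viewed as four layers of column-sorts of arity $\Theta(n^{2/3})$ with value-independent permutations in between, so there is no disagreement there.

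The lower bound, however, has a fatal quantitative gap. Your plan is to fix a single accessible cell $c$ after the third layer and keep it frozen at $0$ by only flipping input bits not connected to $c$. For this you assert that under the contradiction hypothesis $k=o(n^{2/3})$ a third-layer cell sees at most $k^{3}=o(n)$ inputs, and later that the number of bad steps is $O(k^{2})=o(n)$. Neither estimate holds in the relevant regime: $k=o(n^{2/3})$ only gives $k^{3}=o(n^{2})$ and $k^{2}=o(n^{4/3})$. Already for $k=\sqrt{n}$ (which is $o(n^{2/3})$ and must be ruled out) a third-layer cell can be connected to all $n$ inputs, so there may be no input bit you are allowed to flip, and the growing branch you construct can be empty. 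The ``freeze one cell and avoid its ancestors'' technique is exactly the paper's general-depth argument, and for $d=4$ it caps out at $k\geq(n/2)^{1/2}$, because you must pay $k^{\lfloor d/2\rfloor}$ for connectivity down to the inputs and $k^{\lceil d/2\rceil}$ up to the outputs; your own heuristic $k\cdot k\cdot k\approx n$ would give the exponent $1/3$, not $2/3$, which is a sign the bookkeeping cannot close. (A sanity check that something must be wrong: if one really could keep two accessible third-layer cells for $\Omega(n)$ consecutive steps whenever the first three layers have small arity, Lemma~\ref{lem:method} would force a last-layer comparator of arity $\Omega(n)$, contradicting the ColumnSort network itself, whose last layer has arity $O(n^{2/3})$.)

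The missing idea is that the pair of accessible cells cannot be anchored to one fixed cell; it has to migrate. The paper reduces ``two accessible cells after layer $3$'' to ``two accessible cells after layer $2$ lying in different third-layer comparators,'' encodes the situation as a two-level stack problem (first-layer $0$-outputs labelled by their second-layer comparator, second-layer $0$-outputs coloured by their third-layer comparator), and then --- after normalising first-layer arities to lie in $[t,3t]$ with $t=n^{2/3}/100$ --- chooses a \emph{random} window of length $n^{1/3}$ inside each first-layer stack. A Hoeffding plus union-bound argument shows that with positive probability no second-layer comparator is hit by more than a tenth of the windows, and a greedy removal procedure inside these windows then maintains access to two distinct colours for $\Omega(n^{1/3})\cdot\Omega(n^{1/3})=\Omega(n^{2/3})$ consecutive steps. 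That probabilistic selection of windows, which controls the interference you correctly identify as the crux, is the step your proposal does not supply and that the connectivity counting cannot replace.
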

The rest of the section is devoted to the proof of this theorem.

\paragraph*{Upper bound}
The upper bound is implicit in~\cite{Leighton85}. Leighton there gives a sorting algorithm called ColumnSort. It can be converted into a sorting network of depth $d = 4$ and arity $k = \Theta(n^{2/3})$. Namely, the input array in ColumnSort is represented as an $O(n^{2/3})\times O(n^{1/3})$ matrix. There are 4 steps in the algorithm when it sorts each column of the matrix. Each of these steps can be seen as a layer of comparators of arity $O(n^{2/3})$ (this is how many entries are in each column). Between these 4 steps, ColumnSort performs some permutations of the entries of the matrices, but these permutations do not depend on the values of the entries. So, these permutations just define how entries are partitioned between the comparators.

\paragraph*{Lower bound}
Suppose there is a depth-$4$ sorting network such that all its comparators from the first 3 layers have arity at most $t = \frac{n^{2/3}}{100}$. We show that one of the comparators from the fourth layer has arity $\Omega(n^{2/3})$. As in the previous proof, we start with the all-zeros input, and then we add 1's to our input one by one. By Lemma~\ref{lem:method}, it is enough to have access to at least 2 cells before the last layer (or, equivalently, after the third layer) for at least $\Omega(n^{2/3})$ consecutive steps. In turn, to have access to at least 2 different cells after the third layer, it is sufficient to have access to 2 cells after the second layer that go to different third-layer comparators.

First, it is convenient for us to assume that all first-layer comparators have arity at least $t$ and at most $3t$. We can achieve this by joining some first-layer comparators. Namely, while there are at least 2 comparators with arity less than $t$, we join them into one comparator of arity less than $2t$. If there is already just 1 comparator with arity less than $t$, we can add it to any comparator of arity less than $2t$.

The resulting network still sorts all inputs correctly. This is because when we join two first-layer comparators, the set of arrays that can appear after the first layer only becomes smaller.

The rest of the argument is again presented in terms of cubes, see Figure \ref{fig:cubes2}. Fix an arbitrary input to our network.
  We will have two sets of cubes assigned to this input. Cubes from the first set will be identified with cells after the first layer that get value $0$ on this input. As before, we split these cubes into vertical stacks. Each stack corresponds to one of the comparators of the first layer. More specifically, for each comparator of the first layer, we stack its $0$-outputs upon each other as cubes. The order in which they go from the bottom to the top coincides with their order in the network, and on the top we have a cell with the largest index. Additionally, we label these cubes by numbers from $1$ to $a$, where $a$ is the number of second-layer comparators. Namely, each of these cubes is assigned the index of the comparator this cube (or rather the corresponding cell) goes to in the second layer.

Now, cubes from the second set will be identified with cells after the second layer that get value $0$ on our input. We arrange them into vertical stacks according to second-layer comparators, similarly to cubes from the first set. Additionally, cubes from the second set will be colored. Namely, we first assign a distinct color to each of the comparators from the third layer. Then every cube from the second set is colored into the color of the comparator this cube (or rather the corresponding cell) goes to in the third layer.

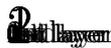
\begin{figure}[h]
\centering

\begin{tikzpicture}[square/.style={regular polygon,regular polygon sides=4},scale=0.8,every node/.style={scale=0.8}]

\draw[->, line width=1mm] (-1,0) -- (-1,7);

	\node at (-2,0) [draw=none] (0) {};
\node at (1.5,7.5) [draw=none] (a) {1st layer};
\node at (4.5,7.5) [draw=none] (b) {2nd layer};
\node at (7.5,7.5) [draw=none] (c) {3rd layer};
\node at (10.5,7.5) [draw=none] (d) {4th layer};

        \node at (0,0) [square,draw,thick,minimum width=1.3cm] (l11) {$1$};
        \node at (0,1) [square,draw, thick, minimum width=1.3cm] (l12) {$0$};
        \node at (0,2) [square,draw,thick, minimum width=1.3cm] (l13) {$0$};
        \node at (0,3) [square,draw,thick,minimum width=1.3cm] (l14) {$1$};
        \node at (0,4) [square,draw,thick,minimum width=1.3cm] (l15) {$0$};
        \node at (0,5) [square,draw,thick,minimum width=1.3cm] (l16) {$1$};
        \node at (0,6) [square,draw,thick,minimum width=1.3cm] (l17) {$0$};
        \node at (0,7) [square,draw,thick,minimum width=1.3cm] (l18) {$0$};

		 \node at (3,0) [square,draw,thick,minimum width=1.3cm] (l21) {$0$};
        \node at (3,1) [square,draw, thick, minimum width=1.3cm] (l22) {$0$};
        \node at (3,2) [square,draw,thick, minimum width=1.3cm] (l23) {$1$};
        \node at (3,3) [square,draw,thick,minimum width=1.3cm] (l24) {$1$};
        \node at (3,4) [square,draw,thick,minimum width=1.3cm] (l25) {$0$};
        \node at (3,5) [square,draw,thick,minimum width=1.3cm] (l26) {$0$};
        \node at (3,6) [square,draw,thick,minimum width=1.3cm] (l27) {$0$};
        \node at (3,7) [square,draw,thick,minimum width=1.3cm] (l28) {$1$};

 \draw [-] (l11) -- (l21) node [midway, draw,circle,fill=one] {};
 \draw [-] (l12) -- (l22) node [midway, draw,circle,fill=one] {};
 \draw [-] (l13) -- (l23) node [midway, draw,circle,fill=one] {};
 \draw [-] (l14) -- (l24) node [midway, draw,circle,fill=one] {};
 \draw [-] (l15) -- (l25) node [midway, draw,circle,fill=two] {};
 \draw [-] (l16) -- (l26) node [midway, draw,circle,fill=two] {};
 \draw [-] (l17) -- (l27) node [midway, draw,circle,fill=two] {};
 \draw [-] (l18) -- (l28) node [midway, draw,circle,fill=two] {};

		 \node at (6,0) [square,draw,thick,minimum width=1.3cm] (l31) {$0$};
        \node at (6,1) [square,draw, thick, minimum width=1.3cm] (l32) {$0$};
        \node at (6,2) [square,draw,thick, minimum width=1.3cm] (l33) {$0$};
        \node at (6,3) [square,draw,thick,minimum width=1.3cm] (l34) {$1$};
        \node at (6,4) [square,draw,thick,minimum width=1.3cm] (l35) {$0$};
        \node at (6,5) [square,draw,thick,minimum width=1.3cm] (l36) {$1$};
        \node at (6,6) [square,draw,thick,minimum width=1.3cm] (l37) {$0$};
        \node at (6,7) [square,draw,thick,minimum width=1.3cm] (l38) {$1$};

 \draw [-] (l21) -- (l31) node [midway, draw,circle,fill=white] {1};
 \draw [-] (l22) -- (l32) node [midway, draw,circle,fill=white] {2};
 \draw [-] (l23) -- (l33) node [midway, draw,circle,fill=white] {2};
 \draw [-] (l24) -- (l34) node [midway, draw,circle,fill=white] {2};
 \draw [-] (l25) -- (l35) node [midway, draw,circle,fill=white] {1};
 \draw [-] (l26) -- (l36) node [midway, draw,circle,fill=white] {2};
 \draw [-] (l27) -- (l37) node [midway, draw,circle,fill=white] {1};
 \draw [-] (l28) -- (l38) node [midway, draw,circle,fill=white] {1};

\node at (9,0) [square,draw,thick,minimum width=1.3cm] (l41) {$0$};
        \node at (9,1) [square,draw, thick, minimum width=1.3cm] (l42) {$0$};
        \node at (9,2) [square,draw,thick, minimum width=1.3cm] (l43) {$0$};
        \node at (9,3) [square,draw,thick,minimum width=1.3cm] (l44) {$0$};
        \node at (9,4) [square,draw,thick,minimum width=1.3cm] (l45) {$1$};
        \node at (9,5) [square,draw,thick,minimum width=1.3cm] (l46) {$1$};
        \node at (9,6) [square,draw,thick,minimum width=1.3cm] (l47) {$0$};
        \node at (9,7) [square,draw,thick,minimum width=1.3cm] (l48) {$1$};

 \draw [-] (l31) -- (l41) node [midway, draw,circle,fill=one] {};
 \draw [-] (l32) -- (l42) node [midway, draw,circle,fill=one] {};
 \draw [-] (l33) -- (l43) node [midway, draw,circle,fill=two] {};
 \draw [-] (l34) -- (l44) node [midway, draw,circle,fill=two] {};
 \draw [-] (l35) -- (l45) node [midway, draw,circle,fill=two] {};
 \draw [-] (l36) -- (l46) node [midway, draw,circle,fill=two] {};
 \draw [-] (l37) -- (l47) node [midway, draw,circle,fill=one] {};
 \draw [-] (l38) -- (l48) node [midway, draw,circle,fill=one] {};

\node at (12,0) [square,draw,thick,minimum width=1.3cm] (l51) {$0$};
        \node at (12,1) [square,draw, thick, minimum width=1.3cm] (l52) {$0$};
        \node at (12,2) [square,draw,thick, minimum width=1.3cm] (l53) {$0$};
        \node at (12,3) [square,draw,thick,minimum width=1.3cm] (l54) {$0$};
        \node at (12,4) [square,draw,thick,minimum width=1.3cm] (l55) {$0$};
        \node at (12,5) [square,draw,thick,minimum width=1.3cm] (l56) {$1$};
        \node at (12,6) [square,draw,thick,minimum width=1.3cm] (l57) {$1$};
        \node at (12,7) [square,draw,thick,minimum width=1.3cm] (l58) {$1$};

           \draw [-] (l41) -- (l51) node [midway, draw,circle,fill=one] {};
 \draw [-] (l42) -- (l52) node [midway, draw,circle,fill=one] {};
 \draw [-] (l43) -- (l53) node [midway, draw,circle,fill=one] {};
 \draw [-] (l44) -- (l54) node [midway, draw,circle,fill=one] {};
 \draw [-] (l45) -- (l55) node [midway, draw,circle,fill=two] {};
 \draw [-] (l46) -- (l56) node [midway, draw,circle,fill=two] {};
 \draw [-] (l47) -- (l57) node [midway, draw,circle,fill=two] {};
 \draw [-] (l48) -- (l58) node [midway, draw,circle,fill=two] {};

 \cube[fill=white]{0,-5,0}{1}
	\cube[fill=white]{0,-4,0}{1}

\cube[fill=white]{2,-5,0}{1}
	\cube[fill=white]{2,-4,0}{1}
	\cube[fill=white]{2,-3,0}{1}
   
\node at (-0.5,-4.5) [draw=none] (f1) {\huge 1};
\node at (-0.5,-3.5) [draw=none] (f2) {\huge 2};

\node at (1.5,-4.5) [draw=none] (f3) {\huge 1};
\node at (1.5,-3.5) [draw=none] (f4) {\huge 2};
\node at (1.5,-2.5) [draw=none] (f5) {\huge 1};

\node at (8,-1.3) [draw=none] (f5) {\Large 1st};

 \cube[fill=one]{8,-5,0}{1}
	\cube[fill=two]{8,-4,0}{1}
	\cube[fill=one]{8,-3,0}{1}
\node at (10,-2.3) [draw=none] (f5) {\Large 2nd};

\cube[fill=one]{10,-5,0}{1}
	\cube[fill=two]{10,-4,0}{1}

    \end{tikzpicture}

\caption{Second-layer comparators are given by numbers 1 and 2, whereas other comparators are given by colors. The first set of cubes is on the left. For example, the first stack there comes from the dark gray first-layer comparator. It has two 0-outputs, one with a smaller index goes to the 1st second-layer comparator, and one with a larger index goes to the 2nd second-layer comparator. So, the cubes are labeled by 1 and 2.
The second set of cubes is on the right. They are colored according to the same principles as in the case of depth 3, but one layer to the right. Additionally, next to each stack we write the index of the corresponding second-layer comparator.
}
\label{fig:cubes2}
\end{figure}

Observe that the number of cubes from the first set that are labeled by $i$ coincides with the size of the $i$th stack from the second set. Indeed, the first number is the number of $0$-inputs to the $i$th second-layer comparator, and the second number is the number of $0$-outputs of this comparator. Additionally, the size of any first-set stack is from $t$ to $3t$, and the size of any second-set stack is at most $t$, because of the arity of the comparators from the first two layers. Finally, for any color, there are at most $t$ cubes of this color because any comparator from the third layer has arity at most $t$.

As in the previous proof, adding 1's to an input is equivalent to removing top cubes from the first set. Now, what happens with cubes from the second set in this process? Assume that we remove some top cube from the first set. Let its label be $i$. This means that some cell after the first layer which goes to $i$th second-layer comparator switches from $0$ to $1$. As a result, the last $0$-output of the $i$th second-layer comparator also switches from $0$ to $1$. This means that the top cube of the $i$th second-set stack gets removed.

In other words, if we have a cube from the first set which is on the top and whose label is $i$, then we also have access to the top cube of the $i$th second-set stack. Now, our goal is to maintain access to 2 second-set cubes of different colors for $\Omega(n^{2/3})$ steps (because having different colors means going into different third-layer comparators). That is, our task reduces to the following problem.

\medskip

\textbf{The second cubes problem.}
There are $n$ \emph{left cubes} and $n$ \emph{right\footnote{We will use an intuition from Figure \ref{fig:cubes2}, where numbered cubes are on the left and colored cubes are on the right.} cubes}. Left cubes are labeled by numbers from $\{1, 2, \ldots, a\}$. Right cubes are colored. All $2n$ cubes are arranged in vertical stacks. We do not mix left and right cubes. That is, there are two types of stacks: those that consist of left cubes (we call them left stacks) and those that consist of right cubes (we call them right stacks). There are exactly $a$ right stacks, they are indexed by numbers from $1$ to $a$.

There are the following restrictions. First,  for every $1 \le i \le a$, the size of the $i$th right stack must be equal to the number of left cubes with label $i$. Next, define $t = \frac{n^{2/3}}{100}$. Each left stack must have size from $t$ to $3t$, and each right stack must have size at most $t$. Finally, for each color, there must be at most $t$ right cubes that have this color.

At one step, we can remove the top cube of one of the left stacks. If we remove a cube with label  $i\in\{1, \ldots, a\}$, then we are also obliged to remove the top cube from the $i$th right stack (and we cannot remove any other cubes in this step).

If there is a top left cube whose label is $i$, we say that it \emph{gives access} to the color of the top cube of the $i$th right stack. \emph{Show} that there is a way of removing cubes such that $\Omega(n^{2/3})$ times in a row (not necessarily right from the start) we have access to at least 2 different colors.

\medskip

\textbf{Solution of the problem.} 

Let $m$ be the number of left stacks. Since each left stack has from $t$ to $3t$ cubes, we have that $m = \Theta(n/t) = \Theta(n^{1/3})$. Set $l = \lfloor n^{1/3}\rfloor$. In each left stack, choose a random consecutive substack of length $l$ (uniformly and independently for each stack).

\begin{lemma}
\label{prob_lemma} With positive probability, for every $i\in\{1, \ldots, a\}$, there will be less than $m/10$ substacks with $i$-labeled cubes. 
\end{lemma}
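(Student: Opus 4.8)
The plan is a first-moment computation, a concentration inequality, and a union bound over labels. For $i\in\{1,\dots,a\}$ let $X_i$ be the number of left stacks whose chosen length-$l$ substack contains at least one cube labeled $i$, and let $\mu_i=\mathbf{E}[X_i]$. Since the substacks are picked independently across the $m$ left stacks, $X_i$ is a sum of $m$ independent $\{0,1\}$-valued random variables. It suffices to show $\Pr[X_i\ge m/10]\le e^{-\Omega(n^{1/3})}$ for every $i$: a union bound over the at most $n$ labels then gives $\Pr[\exists\,i\colon X_i\ge m/10]\le n\cdot e^{-\Omega(n^{1/3})}<1$ for all large $n$, so the desired configuration of substacks exists with positive probability.

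First I would estimate $\mu_i$. Fix a left stack $S$ of size $s\in[t,3t]$ and let $n_{i,S}$ be the number of its cubes labeled $i$. A uniformly random length-$l$ consecutive substack of $S$ covers any fixed cube with probability at most $l/(s-l+1)\le l/(t-l+1)$, so by a union bound over the $n_{i,S}$ such cubes it meets an $i$-labeled cube with probability at most $n_{i,S}\cdot l/(t-l+1)$. Summing over left stacks and using that $\sum_S n_{i,S}$ equals the number of $i$-labeled left cubes, which by the constraints of the problem equals the size of the $i$th right stack and hence is at most $t$, we obtain
\[
\mu_i\ \le\ t\cdot\frac{l}{t-l+1}\ =\ \frac{lt}{t-l+1}\ =\ (1+o(1))\,l\ =\ (1+o(1))\,n^{1/3},
\]
where we used $l=\lfloor n^{1/3}\rfloor$ and $l/t=100\,n^{-1/3}\to 0$. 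On the other hand, the $m$ left stacks together hold $n$ cubes and each has at most $3t$ of them, so $m\ge n/(3t)=\tfrac{100}{3}\,n^{1/3}$; thus $m/10\ge\tfrac{10}{3}\,n^{1/3}$, which exceeds $\mu_i$ by a constant factor bounded away from $1$.

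Next I would apply Hoeffding's inequality, $\Pr[X_i\ge\mu_i+\lambda]\le e^{-2\lambda^2/m}$, valid for every $\lambda\ge 0$ since $X_i$ is a sum of $m$ independent $[0,1]$-valued variables. Taking $\lambda=m/10-\mu_i\ge\tfrac{10}{3}n^{1/3}-(1+o(1))\,n^{1/3}=\Omega(n^{1/3})$ and recalling $m=\Theta(n^{1/3})$, we get $\lambda^2/m=\Theta(n^{1/3})$, hence $\Pr[X_i\ge m/10]\le e^{-\Omega(n^{1/3})}$, and the union bound over the labels finishes the proof. Note that the constant $100$ in $t=n^{2/3}/100$ is what guarantees $m/10>\mu_i$ with room to spare.

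The only subtlety is that $\mu_i$ may be tiny --- even $o(1)$, when the $i$th right stack is small --- so a concentration bound expressed as a relative deviation from the mean, such as $\Pr[X\ge(1+\delta)\mu]\le e^{-\delta^2\mu/3}$, would be worthless here; one must use an inequality in terms of the absolute deviation (Hoeffding as above, or the raw multiplicative Chernoff form $\Pr[X\ge A]\le(e\mu/A)^A$), for which the constant-factor gap between $\mu_i\approx n^{1/3}$ and the threshold $m/10\approx\tfrac{10}{3}n^{1/3}$ is precisely what yields the $e^{-\Omega(n^{1/3})}$ bound. Everything else --- independence of the substack choices, linearity of expectation, and the two union bounds --- is routine.
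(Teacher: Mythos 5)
Your proof is correct and follows essentially the same route as the paper's: bound the expected number of affected substacks by $\approx l$ using the covering probability $l/(t-l+1)$ and the fact that the $i$-labeled cubes total at most $t$, then apply Hoeffding to the sum of $m$ independent indicators and union-bound over the labels. The only cosmetic difference is that the paper compares the count to the fixed threshold $3l$ and then checks $3l<m/10$, while you deviate directly to $m/10$; the content is identical.
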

\begin{proof}
Fix $i\in\{1, \ldots, a\}$. Let $a_j$ be the number of $i$-labeled cubes in the $j$th left stack. Note that $a_1 + \ldots + a_m$ is the size of the $i$th right stack. Hence, this sum does not exceed $t$. 

Let $p_j$ for $j=1,\ldots,m$ be the probability that the $j$th substack contains an $i$-labeled cube. We claim that $p_j \le \frac{2l \cdot a_j}{t}$.  This is because $\frac{2l \cdot a_j}{t}$ is an upper bound on the expected number of $i$-labeled cubes in the $j$th substack. Indeed, since the size of any left stack is at least $t$, the number of $l$-length substacks in each of them is at least $t - l + 1$. On the other hand, each cube is covered by at most $l$ substacks. So, an $i$-labeled cube from the $j$th stack falls into a random $l$-length substack with probability at most $\frac{l}{t - l + 1}$. In turn, $\frac{l}{t - l + 1} \le \frac{2l}{t}$ because  $l = \Theta(n^{1/3})$ and $t = \Theta(n^{2/3})$. Multiplying $\frac{2l}{t}$ by the number of $i$-labeled cubes in the $j$th stack, we obtain an upper bound on the expected number of $i$-labeled cubes in the $j$th substack.

So, the expected number of substacks with $i$-labeled cubes is $p_1 + \ldots + p_m \le \frac{2l \cdot a_1}{t} + \ldots + \frac{2l \cdot a_m}{t} \le 2l$. By Hoeffding's inequality~\cite[Theorem 2]{hoeffding1963probability}, the probability that this number exceeds $3 l$ is 
\[\exp\left\{-\Omega\left(m\cdot\frac{l^2}{m^2}\right)\right\} =\exp\left\{-\Omega\left(\frac{l^2}{m}\right)\right\} = \exp\{-\Omega(n^{1/3})\}\] 
(recall that $m =\Theta(n^{1/3})$ and $l = \Theta(n^{1/3})$). Therefore, by the union bound, with positive probability the number of substacks with $i$-labeled cubes is at most $3l$ for every $i\in\{1,\ldots, a\}$. It remains to notice that $3 l$ is smaller than $m/10$. This is because there are $n$ left cubes, and each left stack has at most $3t$ cubes, so the number of left stacks is $m\ge n/(3t) =  \frac{100 \cdot n^{1/3}}{3} = \frac{100}{9} \cdot 3 n^{1/3} \implies 3l \le 3 n^{1/3} \le  \frac{9}{100}\cdot m < m/10$.
\end{proof}
We fix any choice of substacks such that there are less than $m/10$ substacks with $i$-labeled cubes, for every $i\in\{1, \ldots, a\}$. First, remove all cubes above these substacks. Then we will start taking cubes from the substacks. It is prohibited to take the last cube from any of the substacks. We will call a substack with only 1 cube left \emph{exhausted}. 

Let $S$ denote the set of colors to which we currently have  access. Our goal is to have $|S|\ge2$ for $\Omega(n^{2/3})$ times in a row. We achieve this as follows. If $|S| \ge 3$, we take an arbitrary cube, unless all substacks are exhausted. Note that this can decrease $|S|$ by at most 1. Indeed, the cube that we have removed gave access to just one color. Now, if $|S| \le 2$, consider any $|S|$ left cubes that establish access to $S$. Let their labels be $i$ and $j$ (if $|S| = 1$, we have $i = j$). If possible, remove any left cube which belongs to a non-exhausted substack and whose label differs from $i$ and $j$. Observe that we can only add a new color to $S$ because we still have access to old colors through the same cube, and nothing was taken from the $i$th and from the $j$th right stacks. In particular, when $|S|\le 2$, the size of $S$ cannot decrease.

Note that unless $m/2$ substacks are already exhausted, it is still possible to take one more cube according to the above rules. Indeed, if $|S| \ge 3$, then we need just 1 non-exhausted substack. Now, if $|S| \le 2$, then we can take one more cube unless all top cubes of non-exhausted substacks have labels $i$ or $j$. This can happen only if either at least $m/4$ substacks have $i$-labeled cubes or at least $m/4$ substacks have $j$-labeled cubes. But this is impossible due to our choice of substacks.

These considerations show that we get stuck only if we have already taken at least $m/2 \cdot (l - 1)$ cubes. Another observation is that once the size of $S$ became larger than $1$, it can never become 1 again. So we can have $|S| = 1$ for some number of steps in the beginning, and then we will always have $|S| \ge 2$. Finally, note that we can have $|S| = 1$ for at most $t$ steps. Indeed, throughout the whole period when $|S| = 1$, we have access to the same color $c$ (recall that when $|S|\le 2$, colors cannot be removed from $S$, only a new color can be added). Each time we make a step, we remove some right cube whose color is $c$, otherwise we would have had access to another color. But there are at most $t$ cubes whose color is $c$.

Thus, we will have $|S| \ge 2$ for at least $m/2 \cdot (l - 1) - t$ steps. Since $m \ge n/(3t)$, $l = \lfloor n^{1/3}\rfloor$ and $t = \frac{n^{2/3}}{100}$, the quantity $m/2 \cdot (l - 1) - t$ is $\Omega(n^{2/3})$.

\end{document}